\documentclass[letterpaper, 10 pt, conference]{ieeeconf} 
\IEEEoverridecommandlockouts                              % This command is only needed if 
% you want to use the \thanks command
\usepackage{setspace}
\overrideIEEEmargins                                      % Needed to meet printer requirements.
\usepackage{amssymb}
\usepackage{amsmath}
\usepackage{color}
\newtheorem{definition}{Definition}
\newtheorem{assumption}{Assumption}
\newtheorem{theorem}{Theorem}
\newtheorem{remark}{Remark}
\newtheorem{lemma}{Lemma}
\usepackage{cite}
\usepackage{algorithm}
\usepackage{algorithmicx}
\usepackage[noend]{algpseudocode}
\usepackage{bm}
\usepackage{graphicx}
\usepackage{footmisc}
\usepackage{epstopdf}
\usepackage{geometry}
\geometry{left=16.9mm, right=16.9mm, top=20.1mm, bottom=15.2mm}
\allowdisplaybreaks[4]
\makeatletter
\newenvironment{breakablealgorithm}
  {% \begin{breakablealgorithm}
   \begin{center}
     \refstepcounter{algorithm}% New algorithm
     \hrule height.8pt depth0pt \kern2pt% \@fs@pre for \@fs@ruled
     \renewcommand{\caption}[2][\relax]{% Make a new \caption
       {\raggedright\textbf{\ALG@name~\thealgorithm} ##2\par}%
       \ifx\relax##1\relax % #1 is \relax
         \addcontentsline{loa}{algorithm}{\protect\numberline{\thealgorithm}##2}%
       \else % #1 is not \relax
         \addcontentsline{loa}{algorithm}{\protect\numberline{\thealgorithm}##1}%
       \fi
       \kern2pt\hrule\kern2pt
     }
  }{% \end{breakablealgorithm}
     \kern2pt\hrule\relax% \@fs@post for \@fs@ruled
   \end{center}
  }
\makeatother

\title{\LARGE \bf
Privacy-Preserving Push-sum Average Consensus via State Decomposition
}
\author{Xiaomeng Chen$^{1}$, Lingying Huang$^{1}$, \thanks{$^{1}$X. Chen, L. Huang and L. Shi are with the Department of Electronic and Computer Engineering, Hong Kong University of Science and Technology, Clear Water Bay, Kowloon, Hong Kong ( email:xchendu@connect.ust.hk, lhuangaq@connect.ust.hk, eesling@ust.hk).}  Kemi Ding$^{2}$, \thanks{$^{2}$K. Ding is with the School of Electrical and Electronic Engineering, Nanyang Technological University, Singapore 639798, Singapore (email: kemi.ding@ntu.edu.sg).} Subhrakanti Dey$^{3}$,\thanks{$^{3}$S. Dey is with the Hamilton Institute, NUI Maynooth, Maynooth, Co. Kildare, Ireland, and also affiliated with Uppsala University. (email: subhrakanti.dey@angstrom.uu.se).} and Ling Shi$^{1}$
}

\begin{document}
\maketitle
\thispagestyle{plain}
\pagestyle{plain}

\begin{abstract}
Average consensus is extensively used in distributed networks for computation and control, where all the agents constantly communicate with each other and update their states in order to reach an agreement. Under a general average consensus algorithm, information exchanged through wireless or wired communication networks could lead to the disclosure of sensitive and private information. In this paper, we propose a privacy-preserving push-sum approach for directed networks that can protect the privacy of all agents while achieving average consensus simultaneously. Each node decomposes its initial state arbitrarily into two substates, and their average equals to the initial state, guaranteeing that the agent's state will converge to the accurate average consensus. Only one substate is exchanged by the node with its neighbours over time, and the other one is reserved. That is to say, only the exchanged substate would be visible to an adversary, preventing the initial state information from leakage. Different from the existing state-decomposition approach which only applies to undirected graphs, our proposed approach is applicable to strongly connected digraphs. In addition, in direct contrast to offset-adding based privacy-preserving push-sum algorithm, which is vulnerable to an external eavesdropper, our proposed approach can ensure privacy against both an honest-but-curious node and an external eavesdropper. A numerical simulation is provided to illustrate the effectiveness of the proposed approach. 
\end{abstract}

%%%%%%%%%%%%%%%%%%%%%%%%%%%%%%%%%%%%%%%%%%%%%%%%%%%%%%%%%%%%%%%%%%%%%%%%%%%%%%%%
\section{INTRODUCTION}
With increasing applications in smart grids, smart buildings and intelligent transportation systems, etc, the cooperative distributed algorithm has been a heated research topic during the last decade. When all the components of a network reach a common agreement, we say that the distributed system reaches a consensus. One of the most commonly adopted consensus algorithm is the average consensus algorithm, where each agent aims to reach the average of their initial values. The convergence of average consensus is firstly proved by DeGroot \cite{degroot1974reaching} and further studied by other researchers (e.g., \cite{chatterjee1977towards}, \cite{tsitsiklis1986distributed}). Typical applications of average consensus include distributed sensor fusion \cite{xiao2005scheme}, load balancing in parallel computing \cite{boillat1990load} and coordinated control \cite{ren2005consensus}. 

Conventional average consensus algorithm requiring each node to exchange their state information with the neighbouring nodes in order to reach the average consensus, is not desirable if the participating nodes have sensitive and private information. In addition, by hacking into communication links, an external eavesdropper has access to state information, which is exchanged through wireless or wired communication networks. As the number of privacy leakage events is increasing, there is an urgent need to preserve privacy of each agent in distributed systems. 

  Several approaches have been proposed in recent years to protect privacy. The main idea of most existing privacy-preserving approaches is to mask signals by adding noises. Nozari et al.\cite{NOZARI2017221} proposed a differentially private consensus algorithm by adding some uncorrelated noises. However, it cannot converge to the exact average due to the tradeoff between the privacy and accuracy.  To improve this tradeoff, Mo et al.\cite{mo2016privacy} devised a new mechanism where exchanged information is masked by correlated noises, and the convergence to the correct average is also guaranteed. 
Another strand of research emerged recently is observability-based privacy-preserving approaches. Alaeddini et al. \cite{alaeddini2017adaptive} guaranteed the privacy protection by minimizing the information about a certain node from the observability perspective.\\ 
\indent None of the aforementioned approach, however, is suitable for directed graphs with weak topological restrictions. To preserve privacy of nodes interacting on an unbalanced graph, Charalambous et al. \cite{charalambous2019privacy} proposed an offset-adding privacy-preserving approach, and Gao et al. \cite{gao2018privacy} protected privacy by adding randomness on edge weights, both of which are only effective against honest-but-curious nodes. To improve resilience to external eavesdroppers, Hadjicostis et al. \cite{hadjicostis2020privacy} employed homomorphic encryption to maintain privacy, relying on a trusted node. As a consequence, this approach requires  a large amount of computation and communication, which may be inapplicable for systems with limited resources. Aiming at protecting privacy against both honest-but-curious nodes and eavesdroppers, Wang \cite{wang2019privacy} proposed a privacy-preserving mechanism in which the state of a node is decomposed into two-substates. However, it is inapplicable to directed graphs and only considers the eavesdropper which is unaware of the entire information about the network. To summarize, the main challenges arising in the design and analysis of privacy-preserving average consensus algorithm are as follows:
\begin{enumerate}
	\item \textbf{Communication over directed graphs:} In practice, the information flows among sensors may not be bidirectional due to the different communication ranges, e.g., the coordinated vechicle control problem \cite{ghabcheloo2005coordinated} and the economic dispatch problem \cite{yang2013consensus}. Owing to the unbalanced interactions over agents, the approaches in \cite{NOZARI2017221},\cite{mo2016privacy} and \cite{wang2019privacy} fail to achieve average consensus in such scenarios. 
	\item \textbf{Definition of privacy-preserving:} The privacy notion of differential privacy is not suitable for the privacy-preserving mechanism which is not based on noise injection. Therefore, a new privacy notion is needed to measure the privacy degree.
	\item \textbf{Existence of eavesdroppers:} Different from honest-but-curious nodes, an external eavesdropper is a stronger adversary which has access to more information. It is difficult to protect privacy against an external eavesdropper with low computation load.
\end{enumerate}

% When the graph is undirected both problems are rather trivial; however, these problems are more challenging for case of a directed graph. The extension to directed graph is important because it allows us to adopt a very general model for the communication modality between nodes.where the communication topology contains unidirectional links 

 In this paper, based on the concepts introduced in \cite{wang2019privacy}, we propose a state-decomposition based privacy-preserving algorithm for directed networks that can maintain the privacy of all agents while guaranteeing the accuracy of the average consensus at the same time. In addition, we specify the estimation strategy of the eavesdropper which knows more information than \cite{wang2019privacy} and provide analytical results on its estimation performance. The main contributions of this paper are summarized as follows:
\begin{enumerate}
\item We propose a novel privacy-preserving push-sum algorithm for strongly connected digraphs, which addresses explicitly the constraints imposed by the topology of the communication network (\textbf{Algorithm \ref{alg2}}). Furthermore, using coefficients of ergodicity \cite{seneta2006non}, we prove the convergence of our proposed approach to the exact average of the initial value (\textbf{Theorem 1}).
%\item { \color{black}We define that the privacy of each node is preserved against honest-but-curious nodes if they have infinite uncertainty on the sensitive data based on the accessible information. 
%{ We verify that our proposed privacy-preserving approach is still effective (\textbf{Theorem 4}) with low computational complexity against an eavesdropper (see \textit{Definition 2}), while the existing privacy-preserving approaches in \cite{charalambous2019privacy} and \cite{gao2018privacy} fail in the same setting.}while the existing privacy-preserving approaches in \cite{charalambous2019privacy} and \cite{gao2018privacy} fail in the same setting.}
\item Different from the privacy notion in \cite{charalambous2019privacy} and \cite{hadjicostis2020privacy} which only considers the exact initial value,  we define the privacy preservation against honest-but-curious nodes, where the privacy of each node is preserved if honest-but-curious nodes have infinite uncertainty (see \textit{Definition 4} for more details) on the initial value based on the accessible information. Moreover, we prove that our proposed approach can preserve privacy of each node against honest-but-curious nodes for certain topological conditions (\textbf{Theorem 2} and \textbf{Theorem 3}).
\item  We analyze the privacy-preserving performance of the proposed algorithm, in the presence of an eavesdropper (see \textit{Definition 2}) and prove that the estimation error of the eavesdropper cannot be bounded in probability (\textbf{Theorem 4}). In contrast, the estimation error converges to zero under the existing privacy-preserving approaches in \cite{charalambous2019privacy} and \cite{gao2018privacy}.
\end{enumerate}

\textit{Notations:} In this paper,  $\mathbb{N}$ and $Z_+$ represent the sets whose components are  natural numbers and positive integers. $\mathcal{N}(\mu, \sigma^2)$ denotes the Gaussian distribution with mean $\mu$ and covariance $\sigma^2$. $U(a,b)$ denotes the uniform distribution over the interval $(a,b).$ For an arbitrary vector $\bm{x},$ we denote its $i$th element by $\bm{x}_i$. For an arbitrary matrix $\bm{M},$ we denote its  element in the $i$th row and $j$th column by $[\bm{M}]_{ij}$. $\lfloor x \rfloor$ denotes the greatest integer less than or equal to $x$. Finally, ``w.p. $p$'' stands for ``with probability $p$'' in this paper.
\section{PRELIMINARIES}
\subsection{Network Model}
We consider a directed graph (digraph) $G\triangleq (\mathcal{V}, \mathcal{E})$ with $N$ nodes, where $\mathcal{V}=\{1,2,\ldots,N\}$ denotes the node set and $\mathcal{E}\subset \mathcal{V} \times \mathcal{V}$ denotes the edge set, respectively. A communication link from node $i$ to node $j$ is denoted by $(j,i)\in  \mathcal{E}$, indicating that node $i$ can send messages to node $j$. The self-loop is not included in the directed graph $G$. The nodes who can directly send messages to node $i$ are represented as in-neighbours of node $i$ and the set of these nodes is denoted as $N_i^{in}=\{j\in \mathcal{V}\mid (i,j)\in \mathcal{E}\}$. Similarly, the nodes who can directly receive messages from node $i$ are represented as out-neighbours of node $i$ and the set of these nodes is denoted as $N_i^{out}=\{j\in \mathcal{V}\mid (j,i)\in \mathcal{E}\}$. In this paper, we consider that the initial value of each node is not equal to the average of them\footnote{{ The probability of the event that the initial value of a node equals to the average value of all nodes is zero if each initial value is a continuous random variable.}}, but each node runs the proposed privacy-preserving push-sum algorithm (Algorithm 3) to infer their average.
  { \begin{assumption}\label{asp1}
The digraph $G$ is assumed to be strongly connected with $N$ nodes, where $N >2$. In other words, there exists at least one directed path from any node $i$ to any node $j$ in the digraph with $i\neq j$. 

 \end{assumption}}

\subsection{General Push-sum Algorithm}
The push-sum algorithm, introduced originally in \cite{kempe2003gossip}, aims to achieve average consensus for each node communicating on a directed graph with relatively weak topological restrictions. Consider a network of $N$ nodes, where each node has a private initial state, {\color{black}termed as  $x_i(0)$ for node $i\in \mathcal{V}$.} Without loss of generality, we assume that the initial state is a scalar. In the push-sum algorithm, each node generates two values, $x_{i,1}(k)$ and $x_{i,2}(k)$, both of which are updated in the same way. The algorithm is described as follows:\begin{breakablealgorithm}
\caption{General Push-sum Algorithm}
\label{alg1}
\begin{algorithmic}
 \State \textbf {Step 1}.   Node $i\in \mathcal{V}$ initializes $x_{i,1}(0)=x_i(0)$ and $x_{i,2}(0)=1$. The coupling weight between node $j$ and node $i$ is denoted as $p_{ji}$ and the self-weight of node $i$ is denoted as $p_{ii}$. 
 \State \textbf {Step 2}.  At iteration $k$:\\
 \begin{enumerate}
 \item Node $i$ randomly chooses a set of weights, $\{p_{ji}(k) \sim U({\color{black}0},1)\mid j\in N_i^{out}\cup\{i\}\}$,  { and then normalizes them such that $\sum_{j=1}^N p_{ji}(k)=1$}. Also, $ p_{ji}(k) $ is set to be 0 if $(j, i)\notin N_i^{out}$. 
  \item Node $i$ computes $p_{ji}(k)x_{i,1}(k)$ and $p_{ji}(k)x_{i,2}(k)$, and sends them to its out-neighbors $j\in N^{out}_i$.  
\item After receiving the information from its in-neighbors $j\in N^{in}_i$, node $i$ updates $x_{i,l}$ as follows:
$$
  x_{i,l}(k+1)=\sum\limits_{j\in N^{in}_i\cup\{i\}}p_{ij}(k)x_{j,l}(k),\qquad  l=1,2.
$$

\item Node $i$ computes the estimated average  $$
 { \hat x^{ave}_i}(k+1)=  x_{i,1}(k+1)/ x_{i,2}(k+1).$$

\end{enumerate}
\end{algorithmic}
\end{breakablealgorithm}

It is known that the general push-sum algorithm can reach the exact average if the network is strongly connected and the matrix $\textbf{P}(k)$ is column-stochastic, where $\textbf{P}(k)=[p_{ij}(k)]$ with $p_{ij}(k)$ defined in Algorithm \ref{alg1} \cite{rezaeinia2019push}.

\subsection{Privacy Leakage}
In this paper, we consider two types of adversaries, which are defined as follows. 
\begin{definition}\label{df3}
An  honest-but-curious adversary is a node which follows the system's  protocol and attempts to infer the private information of other nodes under the knowledge of its received data.
In addition, it can collude with other honest-but-curious nodes, i.e., multiple honest-but-curious nodes can share their received data with each other.
\end{definition}

{ We further consider a class of eavesdroppers.}
\begin{definition}
\label{df2}
{\color{black}An  eavesdropper is an external attacker who  has no prior knowledge of the privacy-preserving method the system adopted but knows the network topology and weights between every pair of nodes. Furthermore, it can intercept all transmitted data, { and
runs Algorithm \ref{alg3} to estimate the private information of each node.}}
\end{definition}

Generally speaking, an eavesdropper is more disruptive than an honest-but-curious node since it has access to more information. Let $\mathcal{I}_e(k)$ denote the information set available to an external eavesdropper at iteration $k$, given by: 
\begin{equation*}
\label{infor}
\mathcal{I}_e(k)=\{p_{ij}(k),p_{ij}(k)x_{j,l}^+(k)\mid \forall i,j\in \mathcal{V} ,j\neq i, l=1,2\},
\end{equation*}
{where $x_{j,l}^+(k)$ denotes the obfuscated version of $x_{j,l}(k)$}. With the accumulated information, an eavesdropper can estimate the initial value of each node by the following algorithm, which shares a similar concept { to eavesdropper estimating algorithms  for undirected graphs in \cite{wang2019privacy} and \cite{ruan2019secure}}.  
 \begin{breakablealgorithm}
\caption{Eavesdropper Estimating Algorithm}
\label{alg3}
\begin{algorithmic}
 \State \textbf {Step 1}. The external eavesdropper initializes $s_1(0)=x_{i,1}^+(0)$ and $s_2(0)=x_{i,2}^+(0)$. 
 \State \textbf {Step 2}.  At iteration $k$:\\
 \begin{enumerate}
  \item The eavesdropper computes $p_{ii}(k)=1-\sum\limits_{j\neq i}p_{ji}(k)$, and updates $s_1(k)$  and $s_2(k)$ as follows:	
 \vspace*{-0.5mm}
$$
 s_l(k+1)=s_l(k)+x_{i,l}^+(k+1)-\sum\limits_ {j\in N_{i}^{in}\cup \{i\}} p_{ij}(k)x_{j,l}^+(k),
 \vspace*{-2mm}
$$

where $l=1,2$.
\item The initial value of node $i$ is estimated by
$$
\hat x_i^0(k)=s_1(k)/s_2(k).
$$
\end{enumerate}
\end{algorithmic}
\end{breakablealgorithm}

{ \begin{remark}
		Algorithm 2 constructs an effective observer for an eavesdropper to estimate the initial value of node $i$, which mimics the system under the general push-sum Algorithm \ref{alg1}. By adopting Algorithm 2, the eavesdropper can successfully estimate the initial value of each node which runs the general push-sum algorithm or existing privacy-preserving approaches in  \cite{charalambous2019privacy} and \cite{gao2018privacy} {\color{black}as $\lim\limits_{k\rightarrow\infty} \hat x_i^0(k)=x_i(0)$, which is shown in Section \ref{sim}.} For different average consensus algorithms, the estimation algorithms of the eavesdroppers would be different accordingly.  To the best of our knowledge, Algorithm 2 in our paper is the first proposed algorithm that is effective for directed graphs. 
		\end{remark}}

%%%%%%%%%%%%%%%%%%%%%%%%%%%%%%%%%%%%%%%%%%%%%%%%%%%%%%%%%%%%%%%%%%%%%%%%%%%%%%%%
\section{PRIVACY-PRESERVING ALGORITHM}
In this section, we propose a state-decomposition based  push-sum algorithm on directed graphs to preserve privacy (as it has already shown in \cite{gao2018privacy} that the general push-sum algorithm leads to the leakage of privacy) and converge to the accurate average simultaneously. 
\subsection{Privacy-preserving Push-sum Algorithm}
 The main idea of our approach is to let each node decompose its state $x_{i,l}(k)$ into two substates $x^\alpha_{i,l}(k)$ and $x^\beta_{i,l}(k)$, $l=1,2$. The substate  $x^\alpha_{i,l}(k)$ is exchanged with other nodes while $x^\beta_{i,l}(k)$ is never shared with other nodes (i.e., is reserved by node $i$ itself). Hence, the substate $x^\beta_{i,l}(k)$ is imperceptible to the neighbouring nodes of node $i$. 
 \begin{breakablealgorithm}

\caption{Privacy-perserving Push-sum Algorithm}
\label{alg2}
{\color{black}\begin{algorithmic}
 \State \textbf {Initialization:}\\
 \begin{enumerate}
 	\item Node $i\in \mathcal{V}$ { randomly generates an initial substate value  $x_{i,1}^\alpha(0)$ from $U(-M,M)$, where $M>0$ is a pre-defined value, and initializes $x_{i,1}^\beta(0)=2x_{i}(0)-x_{i,1}^\alpha(0)$, $x_{i,2}^\alpha(0)=0$ and $x_{i,2}^\beta(0)=2$, where $x_i(0)$ denotes the private initial state of node $i$. } 
 \end{enumerate}
  \State \textbf {Weight generation:}\\
  \begin{enumerate}
 \item For $k=0$, node $i\in \mathcal{V}$ randomly chooses a set of weights,  $\{p_{ji}(0), \alpha_i(0) \mid j\in N_i^{out}\cup \{i\}\}$ from  $\mathcal{N}(0, M)$, and then normalizes them such that $\sum_{j=1}^N p_{ji}(0)+\alpha_i(0)=1$. Also, $ p_{ji}(k) $ is set to be 0 if $j\notin N_i^{out}$. 
 \item For $k\geq1$, node $i$ randomly chooses a set of weights, {  $\{p_{ji}(k), \alpha_i(k)\sim U ({\color{black}0},1)\mid j\in N_i^{out}\cup \{i\}\}$,  and normalizes them such that $\sum_{j=1}^N p_{ji}(k)+\alpha_i(k)=1$.} Also, $ p_{ji}(k) $ is set to be 0 if $j\notin N_i^{out}$. 
  \end{enumerate}
\State \textbf {State update:} For all $k\geq 0$\\
 \begin{enumerate}
   \item Node $i\in \mathcal{V}$ computes $p_{ji}(k)x^\alpha_{i,1}(k)$ and $p_{ji}(k)x^\alpha_{i,2}(k)$, and sends them to its out-neighbours $j\in N^{out}_i$.  
\item After receiving the information from its in-neighbors $j\in N^{in}_i$, node $i$ updates its two substates $x_{i,l}^\alpha(k+1)$ and $ x_{i,l}^\beta(k+1)$ as follows:
\begin{equation}
\label{e1}
  \left\{  
  \begin{aligned}
  &x_{i,l}^\alpha(k+1)=\sum\limits_{j\in N^{in}_i\cup\{i\}}p_{ij}(k) x^\alpha_{j,l}(k)+ x^\beta_{i,l}(k),\\
  & x^\beta_{i,l}(k+1)=\alpha_i (k)x^\alpha _{i,l}(k),   
  \end{aligned}
  \right. 
\end{equation}
with $i\in \mathcal{V}$, $l=1,2$.

\item Node $i$ computes the estimated average $$ { \hat x^{ave}_i}(k+1)=x^\alpha_{i,1}(k+1) /x^\alpha_{i,2}(k+1).$$

\end{enumerate}
\end{algorithmic}}
\end{breakablealgorithm}

 \subsection{Convergence Analysis}
In this subsection, we prove the  convergence of Algorithm \ref{alg2}.

Under Algorithm \ref{alg2}, as illustrated in Fig. \ref{virtual},  we can view the substate $x_{i,l}^\beta(k)$, $\l\in\{1,2\}$ as the state of a \textit{virtual node} $i^{\beta}$ which can only communicate with the actual node $i$. In other words, all substates compose a  digraph $G'$ with $2N$ nodes, and the sum of the initial values of all nodes in $G'$ is twice of that in the original graph $G$.
\begin{figure}[h]
    \centering
           
         \vspace*{-2mm}\includegraphics[width=0.21\textwidth]{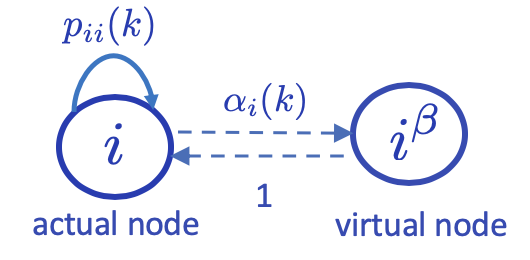}
    \caption{Virtual node $i^\beta$ and corresponding weights. }
            \vspace*{-5mm}
    \label{virtual}
    \end{figure}
    
%it can be easily shown that all substates compose a  digraph $G'$ with $2N$ nodes, and the sum of the initial values of all nodes in $G'$ is twice of that in the original graph $G$. We denote the 
%That is to say, Algorithm \ref{alg2} augments the communication topology digraph with additional $N$ vertices, which can be referred to as \textit{virtual nodes}. We denote the virtual node corresponding to the actual node $i$ as node $i^\beta$, $i \in\mathcal{V}$.
	
	Using matrix-vector notation, the iteration rule in (\ref{e1}) can be rewritten as follows:
\begin{equation}
\label{e2}
  \left\{  
  \begin{aligned}
   &\bm{x_1}(k+1)=\bm{\hat P}(k)\bm{x_1}(k), \\
   & \bm{x_2}(k+1)=\bm{\hat P}(k)\bm{x_2}(k), 
  \end{aligned}
  \right. 
\end{equation}\vspace*{-2mm}
where $$\vspace*{-2mm}\bm{x_1}(k)=[x_{1,1}^\alpha(k),\ldots,x_{N,1}^\alpha(k), x_{1,1}^\beta(k),\ldots,x_{N,1}^\beta(k)]^\top,$$ $$\bm{x_2}(k)=[x_{1,2}^\alpha(k),\ldots,x_{N,2}^\alpha(k), x_{1,2}^\beta(k),\ldots,x_{N,2}^\beta(k)]^\top,$$ 
\begin{equation*}  
  \bm{\hat P}(k)=\left[
	\begin{array}{cc}
		\bm{P}(k)&\bm{I}_{n\times n}\\ 
		\bm{\Lambda}(k)_{n\times n} & \bm{O}
	\end{array}
	\right],
	\end{equation*}
	with $\bm{P}(k)=[p_{ij}(k)]$ , $\bm{\Lambda}(k)=diag(\alpha_1(k),\ldots,\alpha_N(k))$ (see definitions of $p_{ij}(k)$ and $\alpha_{i}(k)$ in Algorithm \ref{alg2}).
Then, we can obtain
\begin{equation}
  \left\{  
  \begin{aligned}
   &\bm{x_1}(k+1)=\bm{\hat P}(k) \cdots\bm{\hat P}({ 1})\bm{x_1}({ 1}), \\
   & \bm{x_2}(k+1)=\bm{\hat P}(k) \cdots\bm{\hat P}({ 1})\bm{x_2}({ 1}). 
  \end{aligned}
  \right. 
\end{equation}

 {\color{black}Let $\bm{T}_k$ denote the product 
\begin{equation}
\bm{T}_k=\bm{\hat P}(k)\bm{\hat P}(k-1) \cdots\bm{\hat P}(1).	
\end{equation}

We can easily check that matrix $\bm{T}_k$ is column-stochastic since it is the product of column-stochastic matrices.
Next, we use the coefficient of ergodicity \cite{seneta2006non} to establish the convergence of Algorithm \ref{alg2}.

\begin{definition}\label{co}
	For a column stochastic matrix $P_c$, the coefficient of ergodicity $\delta(P_c)$ is defined as
		$$\delta(P_c)=\max\limits_{j}\max\limits_{i_1,i_2}|P_c(j, i_1)-P_c(j,i_2)|.
		\vspace*{2mm}
		 $$
\end{definition}

From Definition \ref{co}, it can be seen that  $\delta(P_c)$  characterizes how different two columns of $P_c$ are; in particular, $\delta(P_c)=0$ if and only if the columns of $P_c$ are identical.

\begin{lemma}
	Suppose \textit{Assumption \ref{asp1}} holds. The coefficient of ergodicity $\delta(\bm{T}_k)$ converges almost surely to zero.
\end{lemma}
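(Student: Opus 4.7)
The plan is to establish $\delta(\bm{T}_k)\to 0$ almost surely by combining (i) a deterministic contraction property for products of column-stochastic matrices whose support defines a strongly connected aperiodic digraph, with (ii) a Borel--Cantelli argument that handles the randomness of the weights.

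First, I would view the $2N$ substates as vertices of an augmented digraph $G'$ consisting of real nodes $1,\dots,N$ and virtual nodes $1^{\beta},\dots,N^{\beta}$. Because node $i$ sends mass to $i^{\beta}$ with weight $\alpha_i(k)>0$ and $i^{\beta}$ returns its entire mass to $i$ with weight $1$, the pair $(i,i^{\beta})$ is mutually reachable, so the strong connectivity of $G$ (Assumption 1) lifts to $G'$. The self-loops $p_{ii}(k)>0$ on real nodes together with the $2$-cycles $i\leftrightarrow i^{\beta}$ render $G'$ aperiodic. Let $D$ be an integer (e.g.\ $D=2N$) that serves as a mixing horizon for this fixed support: any product of $D$ column-stochastic matrices compatible with $G'$ whose nonzero entries exceed some $\eta>0$ has coefficient of ergodicity at most $1-\eta^{D}$ (Seneta \cite{seneta2006non}, Chapter 3).

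Second, to cope with the random weights, fix $\eta_0:=\tfrac{1}{2(N+1)}$ and let $A_k$ be the event that every pre-normalization weight drawn at iteration $k$ lies in $[1/2,1]$; on $A_k$, every nonzero entry of $\bm{\hat P}(k)$ exceeds $\eta_0$ after normalization. Since the weights are drawn i.i.d.\ across $k$, the events $\{A_k\}$ are mutually independent with a common probability $q>0$. Partition the time axis into non-overlapping windows $W_m:=\{mD+1,\dots,(m+1)D\}$ and set $B_m:=\bigcap_{k\in W_m}A_k$; the $B_m$ are independent with $P(B_m)=q^{D}>0$, so by the second Borel--Cantelli lemma, $B_m$ occurs infinitely often almost surely.

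Third, on each event $B_m$ the block product $\bm{Q}_m:=\bm{\hat P}((m+1)D)\cdots\bm{\hat P}(mD+1)$ is column-stochastic with coefficient of ergodicity at most $\lambda:=1-\eta_0^{D}<1$. By the sub-multiplicative property of the coefficient of ergodicity (applied via transposition to the Dobrushin coefficient of the row-stochastic matrix $\bm{Q}_m^{\top}$), interleaving good windows with arbitrary windows (for which $\delta\leq 1$ trivially) yields $\delta(\bm{T}_k)\leq \lambda^{m(k)}$, where $m(k)$ counts the number of completed good windows up to time $k$. Since $m(k)\to\infty$ almost surely, $\delta(\bm{T}_k)\to 0$ almost surely.

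The main technical hurdle is that the normalized $U(0,1)$ weights are not uniformly bounded away from zero along a single sample path, so the classical deterministic contraction bound cannot be applied directly; the Borel--Cantelli step is what bridges this gap by extracting infinitely many time windows over which the required uniform lower bound does hold, after which the standard ergodicity machinery from \cite{seneta2006non} takes over.
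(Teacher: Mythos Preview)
Your argument is correct and takes a genuinely different route from the paper's. The paper groups time into consecutive blocks of length $N$, observes that each block product $W_t$ has at least one strictly positive row (because every node in $G'$ reaches every \emph{real} node in $N$ steps), and then invokes the oscillation-contraction inequality $\max y-\min y\le(1-\alpha_t)(\max x-\min x)$ directly to obtain $\delta(\bm{T}_k)\le(1-\epsilon^N)^{\lfloor k/N\rfloor}$, where $\epsilon=\min_{i,j:p_{ij}(k)>0}p_{ij}(k)$ is treated as a single constant lower bound on all nonzero weights. Your proof instead acknowledges that the normalized $U(0,1)$ weights have no pathwise uniform positive lower bound (indeed $\inf_k\min_{i,j}p_{ij}(k)=0$ almost surely), and repairs this via the second Borel--Cantelli lemma: you extract an almost-surely infinite subsequence of windows on which the weights \emph{are} bounded below by a fixed $\eta_0$, and then let the sub-multiplicativity of the Dobrushin coefficient (which dominates the paper's $\delta$ up to a factor of~$2$) carry the contraction through the intervening ``bad'' windows where only the trivial bound $\delta\le 1$ is available. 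What the paper's approach buys is brevity and an explicit geometric rate $(1-\epsilon^N)^{\lfloor k/N\rfloor}$, but at the cost of leaving the status of $\epsilon$ ambiguous in the random-weight setting; what your approach buys is a fully rigorous almost-sure statement that does not rely on any pathwise uniform bound, at the cost of losing the explicit rate (you only get $\lambda^{m(k)}$ with $m(k)$ a random counting process).
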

\begin{proof}
	Define $$W_t=\prod_{k=N(t-1)+1}^{Nt}{\bm{\hat P}}(k), \quad \forall t\in\mathbb{Z_+}, $$
	where $N$ is the number of actual nodes in the network. Note that  there are paths from each virtual node $i^\beta$ to its corresponding actual node  $i$ and the actual nodes in $\mathcal{V}$ are strongly connected. Hence, the digraph corresponding to the matrix $\bm{\hat P}(k)$ (diagonal excluded) has paths from all nodes to the actual nodes for any $k \in\mathbb{Z_+}$. Furthermore, since $\bm{\hat P}(k)$ has positive diagonal entries at  the location of the actual nodes for any $k \in\mathbb{Z_+}$, any node $i$ or $i^\beta$ has at least one path of length $N$ to any actual node in $\mathcal{V}$.
%	i.e., the matrix $W_t, \forall t\in\mathbb{N}$ has $0/1$ structure that corresponds to a digraph that has paths from each nodes (actual and virtual) to each actual.  
	
	Thus,  there is at least one row with all entries strictly positive in $W_t, \forall t \in\mathbb{Z_+}$. Let $\alpha_t >0$ be the minimum value of the entry in such a row, then we have $\alpha_t\geq\epsilon^N$ for all $t\in\mathbb{Z_+}$, where $\epsilon= \min_{i, j\in\mathcal{V}: p_{ij}(k)>0} p_{ij}(k).$ 
	
	Next, following the standard results on coefficients of ergodicity (see, e.g. \cite{seneta2006non}, and \cite{hajnal1958weak}), we can obtain that if {\color{black}$y^\top=x^\top W_t$}, then
	$$\max\limits_{i} y_i- \min \limits_{i}y_i \leq (1-\alpha_t) (\max\limits_{i} x_i- \min\limits_{i} x_i).$$
	
	Meanwhile, for a generic column stochastic matrix $P_c$, we have that if {\color{black}$y^\top=x^\top P_c,$} then $\max_{i} y_i- \min _{i}y_i \leq (\max_{i} x_i- \min_{i} x_i).$  The forward product $\bm{T}_k=\bm{\hat P}(k)\bm{\hat P}(k-1) \cdots\bm{\hat P}(1)$ can be assembled in blocks of length $N$, i.e., {\color{black}$$\bm{T}_k= \prod\limits_{t=k+1-(k\mod N)}^{k} \bm{\hat P}(t)\prod\limits_{t=1}^{\lfloor k/N \rfloor} W_t ,$$}
	where $\prod_{t=k+1}^{k} \bm{\hat P}(t)=1. $
	 Iterating, we have that if {\color{black}$y^\top=x^\top\bm{T}_{k}$}, then 
	$$\begin{aligned}
	\max\limits_{i} y_i- \min \limits_{i}y_i 		 &\leq \prod_{t=1}^{\lfloor k/N \rfloor}(1-\alpha_t) (\max\limits_{i} x_i- \min\limits_{i} x_i)\\
	&\leq (1-\epsilon^N)^{\lfloor k/N \rfloor} (\max\limits_{i} x_i- \min\limits_{i} x_i).\\
		\end{aligned}$$
		
		By varying $x$ among the vectors of the canonical  basis, we have 
		$$\max\limits_{i} [\bm{T}_{k}]_{ji}-\min\limits_{i}[\bm{T}_{k}]_{ji}\leq (1-\epsilon^N)^{\lfloor k/N \rfloor}, \forall j \in{1,2,\ldots, 2N},$$
		i.e., $\delta(\bm{T}_{k})\leq (1-\epsilon^N)^{\lfloor k/N \rfloor}.$ As $k \rightarrow \infty$, $(1-\epsilon^N)^{\lfloor k/N \rfloor}\rightarrow  0$ happens with probability $1$ and then we have $\delta(\bm{T}_k) \xrightarrow{a.s.}0$, which finishes the proof.
		\end{proof}}

The following theorem shows that the privacy-preserving push-sum algorithm  converges to the average value of all initial values. 
\begin{theorem}
	{ For a digraph satisfying \textit{Assumption \ref{asp1}}}, under Algorithm \ref{alg2}, the estimated average ${ \hat x^{ave}_i}(k+1)$ will converge to the average of all initial values $\sum_{i=1}^N x_i(0)/N$ with probability one. Also, $ 
	x^\beta_{i,1}(k+1)/ x^\beta_{i,2}(k+1)$ will converge to the average with probability one. 
\end{theorem}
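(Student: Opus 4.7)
The plan is to combine column-stochasticity of the augmented transition matrix $\bm{\hat P}(k)$ with Lemma~1, in the classical push-sum style: the sums $\bm{1}^\top\bm{x_1}(k)$ and $\bm{1}^\top\bm{x_2}(k)$ will be shown to be invariants of the dynamics, while Lemma~1 will force every column of the forward product $\bm{T}_k$ to collapse asymptotically to a common random vector. Taking the componentwise ratio of the resulting proportional expressions for $\bm{x_1}(k+1)$ and $\bm{x_2}(k+1)$ then yields the claimed limit simultaneously for the $\alpha$- and the $\beta$-substates.

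First I would verify that each $\bm{\hat P}(k)$ is column-stochastic: for $i\le N$, the $i$-th column sums to $\sum_{j=1}^N p_{ji}(k)+\alpha_i(k)=1$ by the normalization step of Algorithm~\ref{alg2}, whereas each of the last $N$ columns equals a standard basis vector inherited from the identity block and so sums to $1$ automatically. Hence $\bm{T}_k$ is column-stochastic, and the quantities $\bm{1}^\top\bm{x_l}(k)$ are preserved along the iteration. Plugging in the initialization yields $\bm{1}^\top\bm{x_1}(k)=\sum_{i=1}^N\bigl(x_{i,1}^\alpha(0)+x_{i,1}^\beta(0)\bigr)=2\sum_{i=1}^N x_i(0)$ and $\bm{1}^\top\bm{x_2}(k)=2N$ for every $k$.

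Next I would invoke Lemma~1. Since $\delta(\bm{T}_k)\xrightarrow{a.s.}0$ and $\bm{T}_k$ is column-stochastic, there exists a (random) stochastic vector $\bm{\pi}(k)\in\mathbb{R}^{2N}$ with $\bm{1}^\top\bm{\pi}(k)=1$ for which $\bm{T}_k-\bm{\pi}(k)\bm{1}^\top\to\bm{0}$ almost surely (one may take $\bm{\pi}(k)$ to be any fixed column of $\bm{T}_k$). Applying this to the two initial vectors $\bm{x_1}(1)$ and $\bm{x_2}(1)$ gives
\[
\bm{x_1}(k+1)=\bm{T}_k\bm{x_1}(1)=\Bigl(2\sum_{i=1}^N x_i(0)\Bigr)\bm{\pi}(k)+o(1),
\]
\[
\bm{x_2}(k+1)=\bm{T}_k\bm{x_2}(1)=2N\,\bm{\pi}(k)+o(1).
\]
Reading off the $i$-th component (for $1\le i\le N$ for the $\alpha$-block and for $N+1\le i\le 2N$ for the $\beta$-block), the numerator and denominator of the ratio become asymptotically proportional with constant $\frac{1}{N}\sum_i x_i(0)$, which proves both $\hat x^{ave}_i(k+1)\to\frac{1}{N}\sum_{i=1}^N x_i(0)$ and $x_{i,1}^\beta(k+1)/x_{i,2}^\beta(k+1)\to\frac{1}{N}\sum_{i=1}^N x_i(0)$ almost surely.

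The main obstacle I anticipate is ensuring that the denominators $x_{i,2}^\alpha(k+1)$ and $x_{i,2}^\beta(k+1)$ stay bounded away from zero, so that the ratio converges rather than degenerating to $0/0$. I would discharge this by reusing the path-and-positivity argument already present in the proof of Lemma~1: strong connectivity of $G$ together with the positive diagonal entries of $\bm{\hat P}(k)$ at the actual-node locations ensures that for $k\ge N$ the rows of $\bm{T}_k$ admit at least one entry uniformly lower-bounded by $\epsilon^N$, and coupling this with $\delta(\bm{T}_k)\to 0$ yields a uniform positive lower bound on each coordinate of $\bm{\pi}(k)$ for large $k$. That positivity legitimizes the ratio computation above and completes the proof of Theorem~1.
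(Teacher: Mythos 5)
Your proposal is correct and follows essentially the same route as the paper: column-stochasticity of $\bm{\hat P}(k)$ gives conservation of $\bm{1}^\top\bm{x_1}(k)=2\sum_{i=1}^N x_i(0)$ and $\bm{1}^\top\bm{x_2}(k)=2N$, Lemma~1 collapses $\bm{T}_k$ to a rank-one matrix $\bm{v}\bm{1}^\top$, and the componentwise ratio (for components $i$ and $i+N$) yields $\sum_{i=1}^N x_i(0)/N$ for both substates. Your additional care in checking that the limiting column has entries bounded away from zero addresses a point the paper silently assumes when it cancels $v_i$, but it does not change the overall argument.
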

\begin{proof}
	 {Since $\bm{\hat P}(0)$  is column stochastic, we have
	\begin{equation*}
	\begin{aligned}	
		&\bm{1}^\top \bm{x_1}(1)=\bm{1}^\top \bm{\hat P(0)}\bm{x_1}(0)=\bm{1}^\top \bm{x_1}(0)=2\sum_{i=1}^N x_i(0),\\
		&\bm{1}^\top \bm{x_2}(1)=\bm{1}^\top \bm{\hat P(0)}\bm{x_2}(0)=\bm{1}^\top \bm{x_2}(0)=2N. 
			\end{aligned}
	\end{equation*}

	}

 {\color{black}Lemma 1 implies that $\delta(\bm{T}_k) \xrightarrow{a.s.}0$, i.e., $\bm{T}_k$ tends to have identical columns with probability one. Thus, we have $\bm{\hat P}(\infty) \cdots\bm{\hat P}(1)=\bm{v1}^\top$, where $\bm{v}=[v_i]$ is a stochastic vector. }Therefore, the estimated average can be obtained by:
\begin{equation}
\begin{aligned}\label{cov1}
	{ \hat x^{ave}_i}(\infty)&=\frac{x^\alpha_{i,1}(\infty)}{ x^\alpha_{i,2}(\infty)} =\frac{[\bm{\hat P}(\infty) \cdots\bm{\hat P}(1)\bm{x_1}({ 1})]_i}{[\bm{\hat P}(\infty) \cdots\bm{\hat P}(1)\bm{x_2}({ 1})]_i}\\
	&=\frac{[\bm{v1}^\top\bm{x_1}({ 1})]_i}{[\bm{v1}^\top\bm{x_2}({ 1})]_i}=\frac{v_i(\bm{1}^\top\bm{x_1}({ 1}))}{v_i(\bm{1}^\top\bm{x_2}({ 1}))}\\
	&=\frac{2\sum_{i=1}^N x_i(0)}{2N}=\frac{\sum_{i=1}^N x_i(0)}{N} \quad \text{w.p.$1$}.
\end{aligned}
\end{equation}
Furthermore,
\begin{equation}
\begin{aligned}
	&\frac{x^\beta_{i,1}(\infty)}{ x^\beta_{i,2}(\infty)}=\frac{[\bm{\hat P}(\infty) \cdots\bm{\hat P}(1)\bm{x_1}({ 1})]_{i+N}}{[\bm{\hat P}(\infty) \cdots\bm{\hat P}(1)\bm{x_2}({ 1})]_{i+N}}\\
	&=\frac{[\bm{v1}^\top\bm{x_1}({ 1})]_{i+N}}{[\bm{v1}^\top\bm{x_2}({ 1})]_{i+N}}=\frac{v_{i+N}(\bm{1}^\top\bm{x_1}({ 1}))}{v_{i+N}(\bm{1}^\top\bm{x_2}({ 1}))}\\
	&=\frac{2\sum_{i=1}^N x_i(0)}{2N}=\frac{\sum_{i=1}^N x_i(0)}{N} \quad \text{w.p.$1$}. 
\end{aligned}
\end{equation}
\end{proof}
 \begin{remark}
 It is worth noting that we assume that the system $G$ with $N$ nodes is expanded to the system $G'$ with	$2N$ nodes for the convenience of convergence analysis. In fact, the number of system nodes does not change. Thus, there is no need to build an additional communication structure under the proposed algorithm, which is desirable if the communication resource is limited. 
 \end{remark}

\subsection{Privacy-preserving Performance Analysis Against Honest-but-Curious Nodes}
In this subsection, we prove that Algorithm \ref{alg2} protects privacy against honest-but-curious nodes. 

{\color{black}According to \textit{Definition 1}, we consider a set of honest-but-curious nodes $\mathcal{A}$ aiming to infer the initial value of node $i \in \mathcal{L}$ based on the information accessible to it, where $\mathcal{L}=\mathcal{V}\backslash \mathcal{A}$ denotes the set of legitimate nodes.	Under Algorithm \ref{alg2}, the information set accessible to the set of honest-but-curious nodes $\mathcal{A}$ at time $k$ can be defined as 
	\begin{equation}\label{info}
	\mathcal{I}_\mathcal{A}(k)=\{\mathcal{I}_a(k) \mid  a\in\mathcal{A}\},\vspace*{-3mm}
	\end{equation}
	where $$\vspace*{-1mm}
	\begin{aligned}
	\mathcal{I}_a(k)\triangleq  \{&x_{a,l}^\alpha(k), x_{a,l}^\beta(k),p_{ja}(k), p_{ap}(k)x_{p,l}^\alpha(k)\\&\mid p\in N_a^{in}, j\in \mathcal{V},l=1,2\}.
	\end{aligned}
	$$

	Given time instant $\kappa \in \mathbb{N}$, {\color{black}the honest-but-curious nodes $\mathcal{A}$ obtain such set of information sequence $\mathcal{I}_\mathcal{A}(0:\kappa)=\cup_{1\leq k\leq \kappa}\mathcal{I}_\mathcal{A}(k)$. For any feasible set $\mathcal{I}_\mathcal{A}(0:\kappa)$, the term  {\color{black}$\Delta(\mathcal{I}_\mathcal{A}(0:\kappa),i)$} denotes  the set of all initial values $x_i(0)$ at node $i$ that
	there exists a set of $x_{n,1}^{\alpha}(0), x_{n,1}^{\beta}(0), n\in  \mathcal{L}$, and a
	sequence of $p_{jn}(k), j\in \mathcal{V}, k=0,1,\ldots, \kappa$ such that the sequence of
	$x_{a,l}^\alpha(k), x_{a,l}^\beta(k),p_{ja}(k), p_{ap}(k)x_{p,l}^\alpha(k), a\in\mathcal{A}, p\in N_a^{in}, j\in \mathcal{V},l=1,2$  generated by Algorithm \ref{alg2} is equal to that in the adversary information set
	$\mathcal{I}_A(0:\kappa)$.}

	The set {\color{black}$\Delta(\mathcal{I}_\mathcal{A}(0:\kappa),i)$} includes all possible initial  values of node $i$ that can generate $\mathcal{I}_\mathcal{A}(0:\kappa)$ in  (\ref{info}). The diameter of {\color{black}$\Delta(\mathcal{I}_\mathcal{A}(0:\kappa),i)$} is defined as 
	$$\text{Diam}(\mathcal{I}_\mathcal{A}(0:\kappa))=\sup\limits_{x_i(0), x_i(0)'\in\ \Delta(\mathcal{I}_\mathcal{A}(0:\kappa),i)} |x_i(0)-x_i(0)'|.$$

	{   \begin{definition}\label{df5}
			The privacy of node $i \in\mathcal{L}$ is preserved against a set of honest-but-curious nodes $\mathcal{A}$ if, for any $\kappa \in \mathbb{N}, \text{Diam}(\mathcal{I}_\mathcal{A}(0:\kappa))=\infty$ for any feasible $\mathcal{I}_\mathcal{A}(0:\kappa)$.
	\end{definition}}
		
	{\color{black}Definition \ref{df5} shares a similar idea to the uncertainty-based privacy notion in \cite{lu2020privacy}, which is inspired from the  notion of $l$-diversity \cite{machanavajjhala2007diversity}. In  $l$-diversity, the diversity of the discrete-valued sensitive data is measured by the  number of different valuations for the data, and a larger diversity leads to a larger uncertainty on the sensitive data. In our problem, we view the continuous-valued $x_i(0)$ as the sensitive data, whose diversity is measured by the diameter the set {\color{black}$\Delta(\mathcal{I}_\mathcal{A}(0:\kappa),i)$}. A larger diameter represents a larger diversity/uncertainty. }
	% 
	% Hence, Definition \ref{df5} is an extended privacy notion of $l$-diversity from the discrete-valued setting to the continuous-valued setting.
	
	Note that in our problem setting, we say that the privacy is preserved if the diameter of the set {\color{black}$\Delta(\mathcal{I}_\mathcal{A}(0:\kappa),i)$} is infinite for any feasible $\mathcal{I}_\mathcal{A}(0:\kappa)$ for any $\kappa \in \mathbb{N}$, achieving the largest possible diversity. That is to say, the adversary is not able to find a unique value or even a meaningful range of $x_i(0)$ and it is more stringent than the privacy definition in \cite{mo2016privacy} and \cite{gao2018privacy} which adopt the noise-adding approach to preserve privacy, because it is possible that the estimated value  resides in a range around the true initial value. }{\color{black}Without loss of generality, we consider the scenario on how to protect the initial value of node $i$.}
\begin{theorem} 
	For a digraph satisfying \textit{Assumption \ref{asp1}}, the privacy of any node $i$ can be preserved against    a set of honest-but-curious nodes $\mathcal{A}$ under Algorithm 3 if {\color{black}$N_i^{out} \cup N_i^{in}\nsubseteq \mathcal{A}$}. 
\end{theorem}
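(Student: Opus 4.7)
The plan is to prove the theorem by constructive observational equivalence: for every feasible observation sequence $\mathcal{I}_\mathcal{A}(0{:}\kappa)$ and every target value $c'\in\mathbb{R}$, I would exhibit a valid execution of Algorithm~\ref{alg2} with $x_i(0)=c'$ that yields exactly this observation sequence. Letting $c'$ range over all of $\mathbb{R}$ then forces $\Delta(\mathcal{I}_\mathcal{A}(0{:}\kappa),i)=\mathbb{R}$ and hence $\mathrm{Diam}(\mathcal{I}_\mathcal{A}(0{:}\kappa))=\infty$, matching Definition~\ref{df5}. Inspection of (\ref{info}) shows that the only channels through which node~$i$'s secret substates enter $\mathcal{A}$'s data are the products $p_{ai}(k)\,x_{i,l}^\alpha(k)$ received by adversarial out-neighbors $a\in N_i^{out}\cap\mathcal{A}$ and the cascaded effect of these products on $\mathcal{A}$'s internal states; in particular, $x_{i,1}^\beta(k)$ is never transmitted and only influences observations through the update $x_{i,1}^\alpha(k{+}1)=\sum_j p_{ij}(k)\,x_{j,1}^\alpha(k)+x_{i,1}^\beta(k)$.

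The hypothesis $N_i^{out}\cup N_i^{in}\nsubseteq\mathcal{A}$ furnishes a legitimate neighbor $j^*$, which will serve as the source of hidden degrees of freedom, and I would split into two dual subcases. In Case~A, $j^*\in N_i^{out}\cap\mathcal{L}$: I keep $x_{i,1}^\alpha(0)$ as in the reference execution and shift $x_{i,1}^\beta(0)$ by $2(c'-c)$, so that $x_i(0)=c'$. The resulting perturbation in $x_{i,1}^\alpha(k)$ for $k\ge 1$ is offset by redefining $p_{ai}(k)'$ for $a\in N_i^{out}\cap\mathcal{A}$ to preserve every observed product $p_{ai}(k)\,x_{i,l}^\alpha(k)$, while the column-stochasticity $\sum_j p_{ji}(k)'+\alpha_i(k)'=1$ is restored via the ``hidden'' weights $p_{j^*i}(k)'$, $p_{ii}(k)'$, and $\alpha_i(k)'$, none of which appears in $\mathcal{I}_\mathcal{A}(k)$ at the step at which it is set. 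Case~B, $j^*\in N_i^{in}\cap\mathcal{L}$, is dual: the weight $p_{ij^*}(k)'$, chosen by the legitimate node~$j^*$, together with $x_{j^*,1}^\alpha(0)$, act as the free variables that cancel the shift of $x_{i,1}^\beta(0)$ inside node~$i$'s update, keeping $x_{i,1}^\alpha(k)$ matched to the reference trajectory.

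Correctness would be verified inductively on $k$, checking that every entry of $\mathcal{I}_\mathcal{A}(k)$ coincides in the two executions. The base case $k=0$ exploits the Gaussian distribution $\mathcal{N}(0,M)$ of the first-round weights, whose unbounded support accommodates arbitrary $c'-c$; the inductive step then reduces to the same scalar compensation. The main obstacle is the cascading effect: perturbing $p_{j^*i}(k)'$ (or $p_{ij^*}(k)'$) generally alters the state of $j^*$ itself, and if $j^*$ has further out-neighbors in $\mathcal{A}$ the perturbation may resurface in a later observation. I plan to absorb this by a coordinated construction in which, at each time step and for each legitimate node along an adversary-observed propagation chain, one outgoing weight directed at another legitimate node is designated as the variable compensating the residual perturbation; because the chain always contains at least one legitimate edge (otherwise $\mathcal{A}$ would observe everything directly, contradicting the existence of $j^*$), the induction closes and the equivalent execution can be built for every $c'\in\mathbb{R}$, establishing the infinite diameter required by Definition~\ref{df5}.
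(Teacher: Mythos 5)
Your overall strategy (exhibit, for every shift $e$, an alternative execution of Algorithm~\ref{alg2} producing the same $\mathcal{I}_{\mathcal{A}}(0{:}\kappa)$ with $x_i(0)$ replaced by $x_i(0)+e$) is the same as the paper's, and your identification of $x_{i,1}^{\beta}(0)$ as the hidden degree of freedom is correct. However, the compensation mechanism you build on top of it has a genuine flaw. In Case~A you allow $x_{i,1}^{\alpha}(k)$ to differ from the reference trajectory for $k\ge 1$ and propose to hide this by rescaling $p_{ai}(k)$ for adversarial out-neighbours $a$. This cannot work, because the same weight $p_{ai}(k)$ multiplies \emph{both} substates $l=1,2$ and the adversary observes both products. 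The $l=2$ trajectory is pinned by the algorithm: $x_{j,2}^{\alpha}(0)=0$ and $x_{j,2}^{\beta}(0)=2$ for every $j$, hence $x_{i,2}^{\alpha}(1)=2$ in \emph{every} execution, so the observed product $p_{ai}(1)x_{i,2}^{\alpha}(1)=2p_{ai}(1)$ reveals $p_{ai}(1)$ exactly, and the $l=1$ product then forces $x_{i,1}^{\alpha}(1)''=x_{i,1}^{\alpha}(1)'$. Consequently any node with an adversarial out-neighbour must have an \emph{unperturbed} $\alpha$-trajectory from $k=1$ onward; your ``coordinated construction'' along propagation chains is attempting to solve a problem that is infeasible, not merely delicate.

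The paper's proof avoids this entirely by making the two executions literally identical from $k=1$ onward. All modifications are confined to iteration $k=0$ and to quantities invisible to $\mathcal{A}$: it shifts $x_{i,1}^{\beta}(0)$ by $+2e$ and $x_{m,1}^{\beta}(0)$ by $-2e$ at the legitimate neighbour $m$ (preserving the sum and hence the consensus value), and absorbs the resulting $\pm 2e$ in $x_{i,1}^{\alpha}(1)$ and $x_{m,1}^{\alpha}(1)$ by perturbing exactly two $k=0$ weights on the hidden edge between $i$ and $m$ ($p_{ii}(0),p_{mi}(0)$ if $m\in N_i^{out}$; $p_{mm}(0),p_{im}(0)$ if $m\in N_i^{in}$). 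This is admissible precisely because the $k=0$ weights are drawn from $\mathcal{N}(0,M)$ with unbounded support (a point you did note) and because $x_{j,2}^{\alpha}(0)=0$ makes every $l=2$ product at $k=0$ identically zero, so the $k=0$ weights are not leaked through the $l=2$ channel the way the $k\ge 1$ weights are. If you redirect your construction so that the entire perturbation is cancelled within the first iteration using only these hidden $k=0$ quantities, the cascading problem disappears and the argument closes.
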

\begin{proof}   {\color{black}Since $N_i^{out} \cup N_i^{in}\nsubseteq \mathcal{A}$, there exists at least one node $m$ that belongs to $N_i^{out} \cup N_i^{in}$ but not $\mathcal{A}$.} Fix any $\kappa \in \mathbb{N}$ and any feasible information set $\mathcal{I}_\mathcal{A}(0:\kappa)$. We denote {\color{black} $\{x_{n,1}^\alpha(0)', x_{n,1}^\beta(0)',p_{jn}(k)'\mid n\in \mathcal{L}, j\in \mathcal{V}, k=0,1,\ldots \kappa\}$ }as an arbitrary set of initial substate values and  weights that satisfies $\mathcal{I}_\mathcal{A}(0:\kappa)$. Hence, {\color{black}we have\footnote{{\color{black}To simplify notation, we write $\Delta(\mathcal{I}_\mathcal{A}(0:\kappa),i)$ as $\Delta(\mathcal{I}_\mathcal{A}(0:\kappa))$ in the remaining paper.}} $x_i(0)'\in \Delta(\mathcal{I}_\mathcal{A}(0:\kappa),i)$,} where $x_i(0)'=(x_{i,1}^\alpha(0)'+x_{i,1}^\beta(0)')/2.$ We then denote $x_i(0)''$ as $x_i(0)''=x_i(0)'+e$, where $e$ is an arbitrary real number.

		Next we show that there exists a set of values $\{x_{n,1}^\alpha(0)'', x_{n,1}^\beta(0)'', p_{jn}(k)'', n \in \mathcal{L}, j\in \mathcal{V}, k=0,1,\ldots, \kappa\}$ which makes $x_i(0)''\in \Delta(\mathcal{I}_\mathcal{A}(0:\kappa)).$ (Note that the valuation of the initial value of node $i$ should still guarantee the convergence to the original average  after altering $x_m(0)'$ to $x_m(0)'-e$, i.e., the sum of all nodes' initial values does not change.) The initial substate values $x_{n,1}^\alpha(0)'', x_{n,1}^\beta(0)''$ are denoted as follows, which satisfy $\hat x_{n,1}^\alpha(0)''+\hat x_{n,1}^\beta(0)''=2\hat x_n(0)'', \forall n\in \mathcal{L}$.
	
	\begin{equation}\label{sub}
	\begin{aligned}
	&x_{i,1}^\alpha(0)''= x_{i,1}^\alpha(0)',  x_{i,1}^\beta(0)''= x_{i,1}^\beta(0)'+2e, \\
	%&{ \hat x}_{j,2}(0)= x_{j,2}(0)+0.5d,\\
	%&{ \hat x}_{m,1}(0)= x_{m,1}(0)-0.5c,\\
	&x_{m,1}^\alpha(0)''= x_{m,1}^\alpha(0)', x_{m,1}^\beta(0)''= x_{m,1}^\beta(0)'-2e,\\
	&x_{q,1}^\alpha(0)''= x_{q,1}^\alpha(0)', x_{q,1}^\beta(0)''= x_{q,1}^\beta(0)',\forall q\in  \mathcal{L} \backslash \{i,m\}.\\
	\end{aligned}
	\end{equation}
	Then we divide the derivation into two situations: $m\in N^{in}_i$ and $m\in N^{out}_i$.
	
	Situation \uppercase\expandafter{\romannumeral 1}: If $m\in N^{in}_i$,  a set of coupling weights is considered as follows:

		\begin{equation}\label{e4}
		\begin{aligned}[l]
		&\alpha_n (k)''=\alpha_n (k)',\forall n\in \mathcal{L}, k=0,\ldots,\kappa, \\
		& p_{mm}(0)''=(p_{mm}(0)'x_{m,1}^\alpha(0)'+2e)/x_{m,1}^\alpha(0)',\\
		&p_{im}(0)''=(p_{im}(0)'x_{m,1}^\alpha(0)'-2e)/x_{m,1}^\alpha(0)',\\
		&p_{tm}(k)''= p_{nm}(k)', \forall t\in \{i,m\},k=1,\ldots, \kappa\,\\
		&p_{qm}(k)''= p_{qm}(k)' , \forall q\in \mathcal{L}\backslash\{i,m\}, k=0,\ldots,\kappa,\\
		&p_{np}(k)''=p_{np}(k)', \forall n\in \mathcal{L},\forall p \in \mathcal{L}\backslash\{m\},k=0,\ldots, \kappa.
		\end{aligned}
		\end{equation}
		
		Under the initial substate values in (\ref{sub}) and the weights \eqref{e4},
		it can be easily verified that the information set at $k=0$ equals to $\mathcal{I}_{\mathcal{A}} (0)$ and  at  $k=1$, $x_{q,1}^\alpha(1)''=x_{q,1}^\alpha(1)', \forall q\in \mathcal{L}_i\backslash\{i,m\}.$
		
		Also, we can obtain that
		\begin{equation*}
		\begin{aligned}[l]
		x_{i,1}^\alpha(1)''&=p_{im}(0)''x_{m,1}^\alpha(0)''\\&+\sum\limits_{j\in N^{in}_i\backslash\{m\}\cup\{i\}}p_{ij}(0)'' x^\alpha_{j,1}(0)''+ x^\beta_{i,1}(0)''\\
		&=(p_{im}(0)'x_{m,1}^\alpha(0)'-2e)x_{m,1}^\alpha(0)'/x_{m,1}^\alpha(0)'\\&+\sum\limits_{j\in N^{in}_i\backslash\{m\}\cup\{i\}}p_{ij}(0)' x^\alpha_{j,1}(0)'+x_{i,1}^\beta(0)'+2e\\
		&=p_{im}(0)'x_{m,1}^\alpha(0)'\\&+\sum\limits_{j\in N^{in}_i\backslash\{m\}\cup\{i\}}p_{ij}(0)' x^\alpha_{j,1}(0)'+x_{i,1}^\beta(0)'\\
		&= x_{i,1}^\alpha(1)',\\
		\end{aligned}
		\end{equation*}
		\begin{equation*}
		\begin{aligned}[l]
		x_{m,1}^\alpha(1)''&=p_{mm}(0)''x_{m,1}^\alpha(0)''\\&+\sum\limits_{j\in N^{in}_m}p_{mj}(0)'' x^\alpha_{j,1}(0)''+ x^\beta_{m,1}(0)''\\
		&=(p_{mm}(0)'x_{m,1}^\alpha(0)'+2e)x_{m,1}^\alpha(0)'/x_{m,1}^\alpha(0)'\\&+\sum\limits_{j\in N^{in}_m}p_{mj}(0)' x^\alpha_{j,1}(0)'+ x^\beta_{m,1}(0)'-2e\\
		&=p_{mm}(0)'x_{m,1}^\alpha(0)'\\&+\sum\limits_{j\in N^{in}_m}p_{mj}(0)' x^\alpha_{j,l}(0)'+ x^\beta_{m,1}(0)'\\
		&= x_{m,1}^\alpha(1)'.\\
		\end{aligned}
		\end{equation*}
		
		Thus, at $k=1$, we have $x_{n,1}^\alpha(1)''=x_{n,1}^\alpha(1)', \forall n\in \mathcal{L}.$ For $k \geq 1$,  it can be easily obtained that $ x_{n,1}^\alpha(k)''=x_{n,1}^\alpha(k)', \forall n\in \mathcal{L}$ under the weights \eqref{e4}. Then, it can be concluded that under the initial substate values in (\ref{sub}) and  weights \eqref{e4}, the information set sequence accessible to set $\mathcal{A}$ equals to $\mathcal{I}_{\mathcal{A} }(0:\kappa)$.

		Situation \uppercase\expandafter{\romannumeral 2}: If  $m\in N^{out}_i$, we consider  the following weights:
		
		\begin{equation}\label{e5}
		\begin{aligned}[l]
		&\alpha_n (k)''=\alpha_n (k)',\forall n\in \mathcal{L}, k=0,\ldots,\kappa,  \\
		&p_{ii}(0)''=(p_{ii}(0)'x_{i,1}^\alpha(0)'-2e)/x_{i,1}^\alpha(0)',\\
		& p_{mi}(0)''=(p_{mi}(0)'x_{i,1}^\alpha(0)'+2e)/x_{i,1}^\alpha(0)',\\
		& p_{ti}(k)''= p_{ti}(k)', \forall t\in \{j,m\},k=1,\ldots,\kappa,\\
		&  p_{qi}(k)''= p_{qi}(k)' , \forall q\in \mathcal{L}\backslash\{i,m\}, k=0,\ldots,\kappa,\\
		& p_{np}(k)''=p_{np}(k)', \forall n\in \mathcal{L},\forall p \in \mathcal{L}\backslash\{i\},k=0,\ldots,\kappa.
		\end{aligned}
		\end{equation}
		
		Under the initial substate values in (\ref{sub}) and the weights \eqref{e5},
		we can easily obtain  that the information set at $k=0$ equals to $\mathcal{I}_{\mathcal{A}} (0)$ and  at  $k=1$, $x_{q,1}^\alpha(1)''=x_{q,1}^\alpha(1)', \forall q\in \mathcal{L}\backslash\{i,m\}.$
		
		Moreover, we have that
		\begin{equation*}
		\begin{aligned}[l]x_{i,1}^\alpha(1)''&=p_{ii}(0)''x_{i,1}^\alpha(0)''\\&+\sum\limits_{j\in N^{in}_i}p_{ij}(0)'' x^\alpha_{j,1}(0)''+ x^\beta_{i,1}(0)''\\
		&=(p_{ii}(0)'x_{i,1}^\alpha(0)'-2e)x_{i,1}^\alpha(0)'/x_{i,1}^\alpha(0)'\\&+\sum\limits_{j\in N^{in}_i}p_{ij}(0)' x^\alpha_{j,1}(0)'+x_{i,1}^\beta(0)'+2e\\
		&=p_{ii}(0)'x_{i,1}^\alpha(0)'\\&+\sum\limits_{j\in N^{in}_i}p_{ij}(0)' x^\alpha_{j,1}(0)'+x_{i,1}^\beta(0)'\\
		&= x_{i,1}^\alpha(1)',\\
		\end{aligned}
		\end{equation*}
		\begin{equation*}
		\begin{aligned}[l]
		x_{m,1}^\alpha(1)''&=p_{mi}(0)''x_{i,1}^\alpha(0)''\\&+\sum\limits_{j\in N^{in}_m\backslash\{i\}\cup\{m\}}p_{mj}(0)'' x^\alpha_{j,1}(0)''+ x^\beta_{m,1}(0)''\\
		&=(p_{mi}(0)'x_{i,1}^\alpha(0)'+2e)x_{i,1}^\alpha(0)'/x_{i,1}^\alpha(0)'\\&+\sum\limits_{j\in N^{in}_m\backslash\{i\}\cup\{m\}}p_{mj}(0)' x^\alpha_{j,1}(0)'+ x^\beta_{m,1}(0)'-2e\\
		&=p_{mi}(0)'x_{i,1}^\alpha(0)'\\&+\sum\limits_{j\in N^{in}_m\backslash\{i\}\cup\{m\}}p_{mj}(0)' x^\alpha_{j,1}(0)'+ x^\beta_{m,1}(0)'\\
		&= x_{m,1}^\alpha(1)'.\\
		\end{aligned}
		\end{equation*}
		
		Hence, at $k=1$, we have $x_{n,1}^\alpha(1)''=x_{n,1}^\alpha(1)', \forall n\in \mathcal{L}.$ Similar to Situations \uppercase\expandafter{\romannumeral 1}, we can obtain  that under the initial substate values in (\ref{sub}) and  weights \eqref{e5}, the information set sequence accessible to set $\mathcal{A}$ equals to $\mathcal{I}_{\mathcal{A}}(0:\kappa)$.	
	
	{\color{black}Summarizing Situations \uppercase\expandafter{\romannumeral 1} and \uppercase\expandafter{\romannumeral 2}, we have that $ x_i(0)''=x_i(0)'+e\in \Delta(\mathcal{I}_\mathcal{A}(0:\kappa))$, then
		$$\text{Diam}(\mathcal{I}_\mathcal{A}(0:\kappa))\geq \sup\limits_{e\in\mathbb{R}} |x_i(0)'-(x_i(0)'+e)| = \sup\limits_{e\in\mathbb{R}} {|e|}=\infty.$$
		
		The above analysis holds for any $\kappa \in \mathbb{N}$ and any feasible $\mathcal{I}_\mathcal{A}(0:\kappa)$. }Therefore, by Definition 4, the privacy of node $i$ is preserved against a set of honest-but-curious nodes $\mathcal{A}$ if node $i$ has at least one legitimate neighbor node $m \notin \mathcal{A}$.
\end{proof}
{ 
	\begin{remark}
		The values of $x_{i,2}^\alpha(0)$ and $x_{i,2}^\beta(0)$ are known to set $\mathcal{A}$ since each node $q$ initializes $x_{q,2}^\alpha(0)=0$ and $x_{q,2}^\beta(0)=2$, $\forall q\in \mathcal{V}$. Nevertheless, the knowledge of $x_{i,2}^\alpha(0)$ and $x_{i,2}^\beta(0)$ does not help the adversary to infer $x_i(0)$ since the transmitted data $p_{ji}(0)x_{i,2}^\alpha(0), j\in N_i^{out}$ is equal to zero under any $p_{ji}(0)$.
\end{remark}}

Note that depending on the value of $e$, the weights $p_{mm}(0)'',  p_{jm}(0)'', p_{jj}(0)''$, $p_{mj}(0)''$ in (\ref{e4}) and  (\ref{e5}) could be outside the range $({\color{black}0},1)$. To ensure $ (x_i(0)'+e) \in \Delta(\mathcal{I}_\mathcal{A}(0:\kappa))$ for arbitrary real number $e$ , the weights at $k=0$ should be unrestricted, which is consistent with the weight selection in Step 1.2) of Algorithm \ref{alg2}.

\begin{theorem}
For a digraph satisfying \textit{Assumption \ref{asp1}}, the initial value of any node  $i$ can be uniquely inferred in an asymptotic sense by a set of honest-but-curious nodes $\mathcal{A}$ if $N_i^{out} \cup N_i^{in}\subseteq \mathcal{A}$.
\end{theorem}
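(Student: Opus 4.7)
The strategy is to exploit a local conservation law at node $i$: since $\mathcal{A}\supseteq N_i^{in}\cup N_i^{out}$ controls every neighbor of $i$, it can observe every weighted packet that crosses the boundary of $i$. Combined with the asymptotic common ratio $\bar x = \frac{1}{N}\sum_j x_j(0)$, which $\mathcal{A}$ reads off from its own state ratios, this will pin down $x_i(0)$ uniquely in the limit. The first move is to define the total substate mass $T_l(k) := x_{i,l}^\alpha(k) + x_{i,l}^\beta(k)$ for $l=1,2$, so that the initialization in Algorithm \ref{alg2} gives $T_1(0) = 2x_i(0)$ and $T_2(0) = 2$.

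Adding the two lines of \eqref{e1} and invoking the column-stochastic normalization $p_{ii}(k) + \alpha_i(k) + \sum_{j\in N_i^{out}} p_{ji}(k) = 1$, a direct computation yields the flow identity
\begin{equation*}
T_l(k+1) - T_l(k) = \sum_{j\in N_i^{in}} p_{ij}(k)\, x_{j,l}^\alpha(k) - \sum_{j\in N_i^{out}} p_{ji}(k)\, x_{i,l}^\alpha(k).
\end{equation*}
Every summand on the right-hand side lies in $\mathcal{I}_\mathcal{A}(k)$: each $p_{ij}(k)$ with $j\in N_i^{in}\subseteq\mathcal{A}$ is chosen by node $j$ and hence available to $\mathcal{A}$; each $x_{j,l}^\alpha(k)$ with $j\in \mathcal{A}$ is an own-state of a colluder; and each product $p_{ji}(k) x_{i,l}^\alpha(k)$ with $j\in N_i^{out}\subseteq\mathcal{A}$ is precisely the packet that node $j$ receives from $i$. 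Crucially, the individual values $p_{ii}(k)$, $\alpha_i(k)$ and $x_{i,l}^\beta(k)$ are never required. Telescoping from $0$ to $K-1$ therefore gives $T_l(K) = T_l(0) + S_l(K)$, where $S_l(K)$ is a finite sum explicitly reconstructible by $\mathcal{A}$.

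Next I pass to the limit. By Theorem 1 and the fact from Lemma 1 that $\hat{\bm{P}}(\infty)\cdots\hat{\bm{P}}(1) = \bm{v}\bm{1}^\top$, both $x_{i,l}^\alpha(k)$ and $x_{i,l}^\beta(k)$ converge almost surely, so $T_l(K)\to T_l(\infty)$ and consequently $S_l(K)\to S_l(\infty)$. The proof of Theorem 1 also shows that $x_{i,1}^\alpha/x_{i,2}^\alpha$ and $x_{i,1}^\beta/x_{i,2}^\beta$ share the common limit $\bar x$, which implies $T_1(\infty)/T_2(\infty) = \bar x$; and $\mathcal{A}$ extracts $\bar x$ asymptotically from $x_{a,1}^\alpha(k)/x_{a,2}^\alpha(k)$ for any $a\in\mathcal{A}$. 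Combining $T_1(\infty) = 2x_i(0) + S_1(\infty)$, $T_2(\infty) = 2 + S_2(\infty)$, and $T_1(\infty) = \bar x\, T_2(\infty)$ gives
\begin{equation*}
x_i(0) = \frac{\bar x\,\bigl(2 + S_2(\infty)\bigr) - S_1(\infty)}{2},
\end{equation*}
an expression determined entirely by $\mathcal{I}_\mathcal{A}(0{:}\infty)$, which proves the claim.

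The main obstacle I expect is bookkeeping rather than analysis: I need to verify carefully that every summand of $S_l(K)$ is reconstructible from the pieces listed in \eqref{info} without any appeal to $p_{ii}(k)$ or $\alpha_i(k)$ individually, and that the exchange of limit with the telescoping sum is valid so that $S_l(\infty)$ is well defined. Once the flow identity is in hand, solving the two-equation system via the ratio constraint $T_1(\infty) = \bar x\, T_2(\infty)$ is immediate, and the asymptotic nature of the unique inference is a direct consequence of the limits required to realize $S_l(\infty)$ and $\bar x$.
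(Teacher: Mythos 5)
Your proposal is correct and follows essentially the same route as the paper's own proof: the quantity you call $T_l(k)$ is exactly the paper's $z_l(k)=x_{i,l}^\alpha(k)+x_{i,l}^\beta(k)$, the flow identity you derive is the paper's equation \eqref{e9}, and the recovery of $x_i(0)$ via telescoping plus the asymptotic ratio constraint $T_1(\infty)=\bar x\,T_2(\infty)$ mirrors the paper's computation of $z_2(k)$, $\lim_k z_1(k)$, and then $z_1(0)=2x_i(0)$. Your version is merely a bit more explicit about the final closed-form expression and about where $\mathcal{A}$ obtains $\bar x$.
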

\begin{proof}
	Since $N_i^{out} \cup N_i^{in}\in \mathcal{A}$, i.e., all the neighbors of node $i$ belong to set $\mathcal{A}$, the following equations can be obtained:
	{   \begin{equation} \label{e9}
		\begin{aligned}
		&z_l(k+1)-z_l(k)\\
		&=\sum\limits_{m\in N_i^{in}}p_{im}(k)x_{m,l}^\alpha(k)-\sum\limits_{n\in N_i^{out}}p_{ni}(k)x_{i,l}^\alpha(k),\\
		\end{aligned}		
		\end{equation}}where $z_l(k)=x_{i,l}^\alpha(k)+x_{i,l}^\beta(k)$, $l\in\{1,2\}$. With the information accessible to set $\mathcal{A}$ and  $z_2(0)=2$, $z_2(k)$ can be easily computed by: {    $$
		\begin{aligned}
		z_2(k)&=z_2(0)\\&+\sum\limits_{t=0}^{k-1}\bigg[\sum\limits_{m\in N_i^{in}}p_{im}(k)x_{m,2}^\alpha(k)-\sum\limits_{n\in N_i^{out}}p_{ni}(k)x_{i,2}^\alpha(k)\bigg].\\
		\end{aligned}
		$$}
	
	As $k$ goes infinity, average consensus will be achieved asymptotically, i.e.,$$\lim\limits_{k \rightarrow \infty } \frac{x^\alpha_{i,1}(k)} {x^\alpha_{i,2}(k)}=\lim\limits_{k \rightarrow \infty } \frac{x^\beta_{i,1}(k)} {x^\beta_{i,2}(k)}=\lim\limits_{k \rightarrow \infty } \frac{z_1(k)}{z_2(k)}, $$
	which leads to $\lim\limits_{k \rightarrow \infty } z_1(k)=\lim\limits_{k \rightarrow \infty } (z_2(k)x^\alpha_{i,1}(k))/(x^\alpha_{i,2}(k)).$
	
	Then, $z_1(0)$ can be obtained by
	{    $$
		\begin{aligned}
		z_1(0)=\lim\limits_{k \rightarrow \infty }\bigg( z_1(k)-\sum\limits_{t=0}^{k-1}\bigg[&\sum\limits_{m\in N_i^{in}}p_{im}(k)x_{m,1}^\alpha(k)\\&-\sum\limits_{n\in N_i^{out}}p_{ni}(k)x_{i,1}^\alpha(k)\bigg]
		\bigg).\\
		\end{aligned}
		$$}
	
	{ Therefore, $x_i(0)$ is uniquely estimated by set $\mathcal{A}$ in an asymptotic sense if $N_i^{out} \cup N_i^{in}\in \mathcal{A}$.}
\end{proof}
\begin{remark}
	{ Theorem 2 and Theorem 3 imply that the single neighbor configuration should be avoided in order to preserve privacy, which is also indicated in other privacy-preserving approaches such as \cite{gao2018privacy}, \cite{kempe2003gossip} and \cite{ruan2019secure}.}
\end{remark}
\subsection{Privacy-preserving Performance Analysis Against External Eavesdroppers}
{ In this subsection, we show that an eavesdropper defined in Definition \ref{df2} fails to estimate the initial value of each node when the system runs Algorithm \ref{alg2}. }
{ \begin{definition}
%	The privacy of node $i$ is preserved against an eavesdropper defined in Definition \ref{df2} if for any $c>0$, there exists a set of weights such that $|\hat x^0_i(k)-x_i(0)|>c$ occurs infinitely often as $k$ goes to infinity.
 The privacy of node $i$ is preserved against an eavesdropper defined in Definition 2 if there does not exist a natural number $k_1$ such that for all $k>k_1$, $|\hat x^0_i(k)-x_i(0)|<c$ for any $c>0$ with probability one.
\end{definition}

Intuitively, this means that there is no guarantee that the estimation error will be bounded by any finite $c$.} \begin{theorem} \label{th4}
	 For a digraph satisfying \textit{Assumption \ref{asp1}}, the privacy of node $i$ can be preserved by Algorithm \ref{alg2} {against an external eavesdropper defined in Definition \ref{df2}. }
\end{theorem}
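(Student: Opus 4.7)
The plan is to execute Algorithm~2 symbolically on the messages produced by Algorithm~3, showing that the eavesdropper's estimate converges almost surely to the network average $\bar x = \tfrac{1}{N}\sum_{j=1}^{N}x_j(0)$ rather than to $x_i(0)$. Since the two differ with probability one (by the continuous-initial-value footnote), this satisfies \textit{Definition 5}.

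First I would collapse the eavesdropper's recursion. Under Algorithm~3 the transmitted substate is $x_{j,l}^+(k)=x_{j,l}^\alpha(k)$, and the state update \eqref{e1} rearranges to
$$x_{i,l}^\alpha(k+1) - \sum_{j\in N_i^{in}\cup\{i\}}p_{ij}(k)\,x_{j,l}^\alpha(k) = x_{i,l}^\beta(k).$$
Inserted into Algorithm~2 this gives the telescope $s_l(k+1)=s_l(k)+x_{i,l}^\beta(k)$, so with $s_1(0)=x_{i,1}^\alpha(0)$ and $s_2(0)=x_{i,2}^\alpha(0)=0$,
$$s_1(k)=x_{i,1}^\alpha(0)+\sum_{t=0}^{k-1}x_{i,1}^\beta(t),\qquad s_2(k)=\sum_{t=0}^{k-1}x_{i,2}^\beta(t).$$
Notice that the eavesdropper's increments are precisely the hidden substates $x_{i,l}^\beta(k)$, which it can never observe.

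Next I would identify the limit of the ratio. Induction from $x_{i,2}^\beta(0)=2$, the recursion $x_{i,2}^\beta(t+1)=\alpha_i(t)\,x_{i,2}^\alpha(t)$, and the nonnegativity of $\alpha_i(k)$ and $p_{ij}(k)$ for $k\ge 1$ shows $x_{i,2}^\beta(t)\ge 0$ for every $t$. The ergodicity analysis of Lemma~1 further implies that $x_{i,2}^\alpha(t)$ converges to a strictly positive (random) limit almost surely, so $s_2(k)$ is eventually strictly increasing and $s_2(k)\uparrow\infty$ a.s. Since Theorem~1 also gives $x_{i,1}^\beta(t)/x_{i,2}^\beta(t)\to\bar x$ with probability one, the Stolz--Ces\`aro theorem applied pathwise yields
$$\lim_{k\to\infty}\hat x_i^0(k)=\lim_{k\to\infty}\frac{s_1(k)}{s_2(k)}=\lim_{t\to\infty}\frac{x_{i,1}^\beta(t)}{x_{i,2}^\beta(t)}=\bar x \text{ w.p.\ }1.$$

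Because $x_i(0)\neq\bar x$ almost surely, $|\hat x_i^0(k)-x_i(0)|\to|\bar x-x_i(0)|>0$ a.s., and no $k_1\in\mathbb{N}$ can force the error below every $c>0$ for all $k>k_1$, which is exactly the condition in \textit{Definition 5}. The main obstacle will be the pathwise application of Stolz--Ces\`aro: I must verify, for almost every realization, both that $s_2(k)$ becomes strictly increasing from some index onward (equivalently, that $x_{i,2}^\alpha(t)$ stays bounded away from $0$ in the tail) and that the ratio convergence in Theorem~1 holds along the same sample paths. Both properties rest on the strong connectivity in \textit{Assumption~1} and the strict positivity of $\alpha_i(k), p_{ij}(k)$ for $k\ge 1$, which force every entry of the limiting common column of $\bm{T}_k$ to be strictly positive.
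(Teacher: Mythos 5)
Your telescoping of the eavesdropper's observer is incorrect, and the error propagates to a false conclusion. In Algorithm~2 the eavesdropper does not observe the true self-weight; it reconstructs it as $p_{ii}(k)=1-\sum_{j\neq i}p_{ji}(k)$, which under Algorithm~3 (where the weights are normalized so that $\sum_{j=1}^{N}p_{ji}(k)+\alpha_i(k)=1$) equals the \emph{true} $p_{ii}(k)$ \emph{plus} $\alpha_i(k)$. Hence the increment of $s_l$ is not $x_{i,l}^\beta(k)$ but
$$x_{i,l}^\alpha(k+1)-\sum_{j\in N_i^{in}\cup\{i\}}p_{ij}(k)x_{j,l}^\alpha(k)-\alpha_i(k)x_{i,l}^\alpha(k)=x_{i,l}^\beta(k)-x_{i,l}^\beta(k+1),$$
so the recursion telescopes to a difference rather than a sum: $s_l(k)=s_l(0)+x_{i,l}^\beta(0)-x_{i,l}^\beta(k)$, i.e., $s_1(k)=2x_i(0)-x_{i,1}^\beta(k)$ and $s_2(k)=2-x_{i,2}^\beta(k)$. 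In particular $s_2(k)$ does not increase to infinity; it equals $2-\beta_k$ with $\beta_k=\alpha_i(k-1)x_{i,2}^\alpha(k-1)$, which never settles because $\alpha_i(k-1)$ is freshly drawn from $U(0,1)$ at every iteration. This is precisely the denominator in the paper's error expression $\left|(a\beta_k+\gamma_k)/(2-\beta_k)\right|$. Your Stolz--Ces\`aro step and the conclusion $\hat x_i^0(k)\to\bar x$ almost surely therefore rest on the wrong recursion, and that conclusion is in fact false: the estimate does not converge at all.

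The correct mechanism is the opposite of the one you describe. Since $\gamma_k\to 0$ while $a=x_i(0)-\bar x\neq 0$, and since $Pr\big(x_{i,2}^\alpha(k-1)\geq 2-\delta_2\big)>0$ together with the independent uniform draw of $\alpha_i(k-1)$ gives $Pr\big(\beta_k\in(2-\delta_1,2-\delta_2)\big)>0$ at every $k$, the denominator $2-\beta_k$ becomes arbitrarily small with positive probability while the numerator stays of order $a$; hence $Pr\big(|\hat x_i^0(k)-x_i(0)|>c\big)>0$ for every $c>0$ and all $k>k_1$. That is what Definition~5 demands and what the paper advertises (``the estimation error cannot be bounded in probability''). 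Your proposed almost-sure convergence to $\bar x$ would instead make the error bounded in probability (converging to $|\bar x-x_i(0)|$), which contradicts the very property the theorem asserts; it would only satisfy Definition~5 for the accidental reason that $\bar x\neq x_i(0)$, and even that rests on the invalid derivation of $s_l(k)$.
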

\begin{proof}
The main idea of this proof is to show that {\color{black}there always exists a natural number $k_1$, such that  for all $k>k_1$, $|\hat x^0_i(k)-x_i(0)|>c$ for any $c>0$ with a nonzero probability.}
Without the knowledge of existing substates, an external eavesdropper assumes the exchange information  $p_{ij}(k)x_{j,l}^\alpha(k)$ as $p_{ij}(k)x_{j,l}^+(k)$. He then employs Algorithm \ref{alg3} to estimate the initial value of node $i$. The resulting estimation error is denoted as { $\left|\hat x^0_i(k)-x_i(0)\right|=\left|(a\beta_k+\gamma_k)/(2-\beta_k\right)|,$
%where $$\begin{aligned}
%&\beta_k=\alpha_i(k-1) x_{i,2}^\alpha(k-1), \\
%&\gamma_k=\bar x \alpha_i(k-1) x_{i,2}^\alpha(k-1)-\alpha_i(k-1) x_{i,1}^\alpha(k-1), \\
%&\bar x=\sum_{i=1}^N x_i(0)/N, \quad a=x_i(0)-\bar x.\\	
%\end{aligned}
%$$ 
where $\beta_k=\alpha_i(k-1) x_{i,2}^\alpha(k-1),\gamma_k=\bar x \alpha_i(k-1) x_{i,2}^\alpha(k-1)-\alpha_i(k-1) 
x_{i,1}^\alpha(k-1),$ $\bar x=\sum_{i=1}^N x_i(0)/N$ and $ a=x_i(0)-\bar x.$
Without loss of generality, we assume $a>0$. Otherwise one can replace $a$ and $\gamma_k$ by $-a$ and $-\gamma_k$, respectively, and follow the same analysis hereinafter. 

(1) From the convergence analysis in equation (\ref{cov1}), we have $$\lim\limits_{k\rightarrow \infty}\alpha_i(k-1) x_{i,1}^\alpha(k-1)= \bar x \alpha_i(k-1) x_{i,2}^\alpha(k-1),$$ i.e., $\lim\limits_{k\rightarrow \infty}\gamma_k=0$. {\color{black}Hence, there exists a natural number $k_1$ such that for all $k>k_1$, $| \gamma_k|<a.$  Then for any $k>k_1$ and $\beta_k \in (2-\delta_1, 2-\delta_2)$, where $0<\delta_2<\delta_1\leq \frac{a}{c+a},$ we have \begin{equation}\label{eq9}
|\hat x^0_i(k)-x_i(0)|>\left|(a\beta_k-a)/(2-\beta_k)\right|.
\end{equation}

Therefore, for $k>k_1$,  $$
\begin{aligned}
Pr&\Big(|\hat x^0_i(k)-x_i(0)|>c,\beta_k \in (2-\delta_1, 2-\delta_2)\Big)\\&
>Pr\bigg(\left|(a\beta_k-a)/(2-\beta_k) \right|>c ,\beta_k \in (2-\delta_1, 2-\delta_2)\Big).\\
\end{aligned}
$$ 

}
(2) If $\beta_k \in (2-\delta_1, 2-\delta_2)$, we can obtain $\left|(a\beta_k-a)/(2-\beta_k)\right|>c,$ i.e.,
{\color{black}
 $$Pr\Big(\left|(a\beta_k-a)/(2-\beta_k)\right|>c \Big| \beta_k \in (2-\delta_1, 2-\delta_2\Big)=1.$$
}
(3) Next we will prove 
\begin{equation}\label{eq11}
Pr\Big(x_{i,2}^\alpha(k-1)\geq 2-\delta_2\Big)> 0	
\end{equation}for any $k>1$ by induction. 

From the system dynamics in Algorithm \ref{alg2}, we have $x_{i,2}^\alpha(1)=2$ and $x_{i,2}^\alpha(2)=2\sum_{j\in N_i^{in}} p_{ij}(1)$. {\color{black}Since $p_{ij}(1)\in (0,1)$ for any $j\in N_i^{in}$,} one has $Pr\Big(\sum_{j\in N_i^{in}} p_{ij}(1)\geq (2-\delta_2)/2 \Big)> 0.$ Hence, (\ref{eq11}) holds when $k=2$ and $k=3$. 

Assuming that (\ref{eq11}) holds when $k=t$ and $k=t+1$, we have $x_{i,2}^\alpha(t+1)=\sum\limits_{j\in N^{in}_i\cup\{i\}}p_{ij}(t) x^\alpha_{j,2}(t)+\alpha_i(t-1)x^\alpha_{i,2}(t-1)\geq (2-\delta_2)(p_{ii}(t)+\alpha_i(t-1))$ with a nonzero probability. 

Since $p_{ii}(t)$ and $\alpha_i(t-1)$ are i.i.d. random variables over the range $(0, 1)$, one has $Pr((p_{ii}(t)+\alpha_i(t-1))\geq 1)> 0$. Hence, $Pr\Big(x_{i,2}^\alpha(k-1)\geq 2-\delta_2\Big)> 0$ holds for $k=t+2$. 

By induction, $Pr\Big(x_{i,1}^\alpha(k-1)\geq 2-\delta_2\Big)>0$ holds for any $k>1$.

(4) Since $\alpha_i(k-1)\in ({\color{black}0},1)$ and $Pr\Big(x_{i,2}^\alpha(k-1)\geq 2-\delta_2\Big)> 0$ for any $k>1$, {\color{black}we have $\forall k>1,$
\begin{equation}\label{eq12}
Pr\bigg(\alpha_i(k-1)\in\big(\frac{2-\delta_1}{x_{i,2}^\alpha(k-1)}, \frac{2-\delta_2}{x_{i,2}^\alpha(k-1)}\big)\bigg)>0,
\end{equation}
i.e., $Pr\big(\beta_{k}\in(2-\delta_1, 2-\delta_2)\big)> 0, \forall k>1$. }

Therefore, {\color{black}for all $k>k_1$,
$$	\begin{aligned}
		&Pr\Big(|\hat
		x^0_i(k)-x_i(0)|>c\Big)\\&\geq Pr\Big(|\hat x^0_i(k)-x_i(0)|>c,\beta_{k} \in (2-\delta_1, 2-\delta_2)\Big)\\&
		>Pr\bigg(\left|(a\beta_{k}-a)/(2-\beta_{k}) \right|>c ,\beta_{k} \in (2-\delta_1, 2-\delta_2)\Big)\\&=Pr\Big(\left|(a\beta_{k}-a)/(2-\beta_{k})\right|>c \Big| \beta_{k} \in (2-\delta_1, 2-\delta_2\Big)\\&\quad \cdot Pr\Big(\beta_{k} \in (2-\delta_1, 2-\delta_2)\Big)\\&=Pr\Big(\beta_{k} \in (2-\delta_1, 2-\delta_2)\Big)> 0,\\
	\end{aligned}$$} i.e., the privacy of node $i$ is preserved against an eavesdropper which runs Algorithm \ref{alg3}.}
\end{proof}

%%%%%%%%%%%%%%%%%%%%%%%%%%%%%%%%%%%%%%%%%%%%%%%%%%%%%%%%%%%%%%%%%%%%%%%%%%%%%%%%

\section{SIMULATIONS}\label{sim}
In this section, we illustrate the effectiveness of our proposed algorithm and  show its advantages over existing privacy-preserving approaches on directed graphs. 

\begin{figure}[htp]
    \centering
    \includegraphics[width=0.2\textwidth]{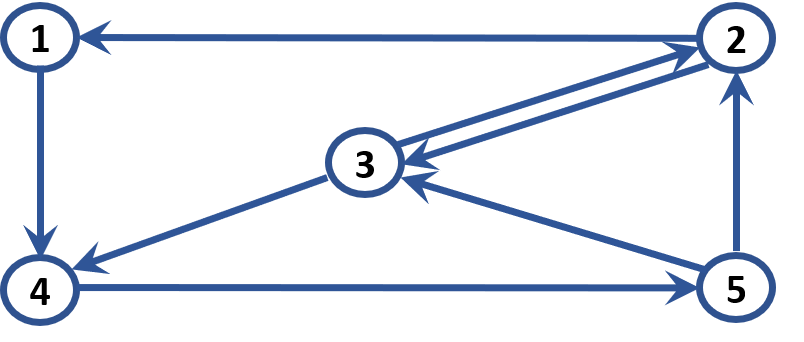}
    \caption{A strongly connected digraph of 5 nodes. }
    \label{digraph}

    \end{figure}

%. The initial values for all nodes are chosen from the normal distribution $\mathcal{N}(0,50)$, and $\epsilon$ is set to be $10^{-3}$. {  At  iteration $k=0$, node $i$ samples  a set of weights $\{p_{ji}(0), \alpha_i(0) \mid j\in N_i^{out}\cup \{i\}\}$ from the normal distribution $\mathcal{N}(0, 50)$ and normalizes them such that $\sum_{j=1}^N p_{ji}(0)+\alpha_i(0)=1$. At each iteration $k\geq 1$, node $i$ chooses the weights from the uniform distribution $U(\epsilon, 1)$ and normalizes them such that $\sum_{j=1}^N p_{ji}(k)+\alpha_i(k)=1$}.
We first verify the convergence performance of our proposed algorithm communicating on a strongly connected digraph with $N=5$ nodes, shown in Fig.\ref{digraph}. The initial values for all nodes are chosen from $U(0, 50)$, $c$ is set to be $500$ and $M$ is set to be $100$. The evolution of the network and the convergence of $\hat x_i^{ave}$ are shown in Fig. \ref{decomp}. It is shown that the convergence is achieved, which is consistent with the theoretical results. {\color{black}Moreover, Fig. \ref{error} shows that the estimation error cannot be always bounded by $c$, i.e., the privacy of node $5$ is preserved against an external eavesdropper.}

\begin{figure}[htp]
    \centering
        \includegraphics[width=2.8in]{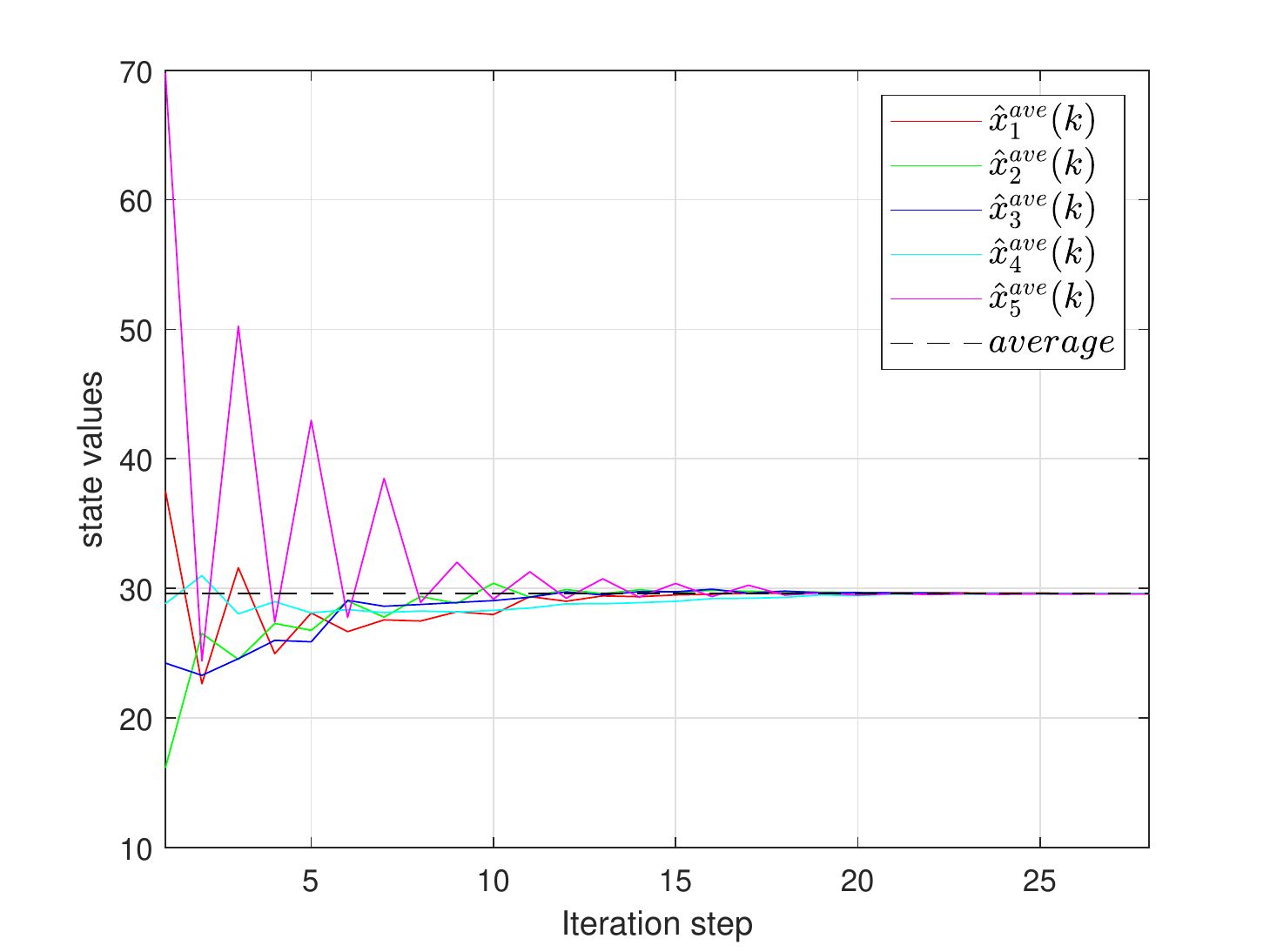}
          \vspace*{-2mm}
    \caption{{ The evolution of the estimated average obtained by the agents under Algorithm 3.}}
    \label{decomp}
    \end{figure}
     
\begin{figure}[htp]
    \centering
    \includegraphics[width=2.8in]{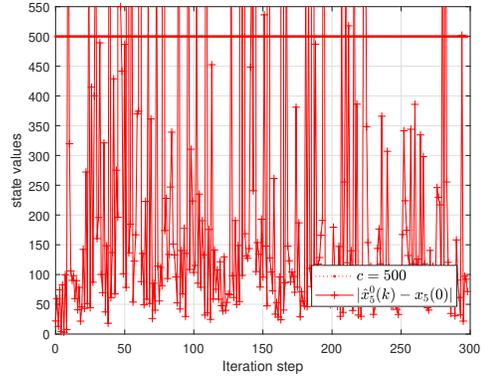}
    \caption{\color{black} The evolution of the absolute value of the error between the estimated $\hat x_5^0(k)$ and the true initial value $x_5(0)$  under Algorithm 2.}
    \label{error}
    \end{figure}

Unlike the privacy-preserving approach in \cite{charalambous2019privacy} and \cite{gao2018privacy}, where the convergence only happens after $k=L+1$ ($L$ is an randomly chosen integer), our proposed approach starts converging to the average at the iteration $k=1$. Fig. \ref{mse} shows the evolution of mean square error (MSE) under different privacy-preserving approaches, where $L$ is set to be $10$. According to Fig. \ref{mse}, our proposed approach converges faster than others and the MSE is much smaller before $k\leq L$. 

      \begin{figure}[htp]
    \centering
    \includegraphics[width=2.8in]{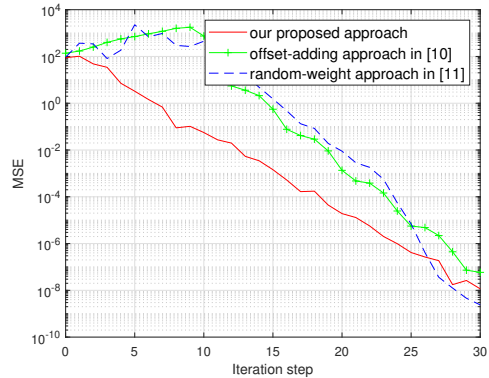}  
    \caption{ The evolution of MSE under different privacy-preserving approaches.}
    \label{mse}
    \end{figure}
    
Next, we investigate the privacy-preserving performance of our proposed approach. Without loss of generality, we assume that an eavesdropper is interested in the initial value of node 5 and applies Algorithm \ref{alg3} to estimate it. 
\begin{figure}[h]
    \centering
 \includegraphics[width=2.8in]{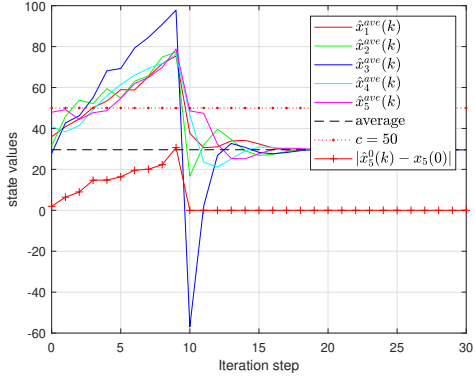}
    \caption{ Eavesdropper can estimate the initial state of node 5 under the privacy-preserving in \cite{charalambous2019privacy} by applying Algorithm \ref{alg3}. }
        \label{offset}
    \end{figure}

Fig. \ref{offset} shows that the evolution of estimated average and the eavesdropper's estimated value under the privacy-preserving approach in \cite{charalambous2019privacy}. It can be seen that as the network converges and the estimated average value reaches the true average value, the eavesdropper's estimated value also converges to the initial value of node 5. Similar results can be obtained using the random-weight approach in \cite{gao2018privacy}. { Therefore, the error between $\hat x_5(0)$ and $x_5(0)$ is bounded, i.e., the privacy of node $5$ cannot be preserved against an eavesdropper in \cite{charalambous2019privacy} and \cite{gao2018privacy}.}

\begin{table}[htp]
    \centering
\caption{Comparison among existing privacy-preserving approaches}
\label{table1}
\begin{tabular}{c|ccccc}
 & ours & \cite{charalambous2019privacy} &  \cite{gao2018privacy}&\cite{hadjicostis2020privacy}& \cite{wang2019privacy}\\ \hline
undirected graphs& \checkmark & \checkmark & \checkmark &\checkmark  &\checkmark  \\
digraphs& \checkmark & \checkmark &\checkmark  & \checkmark & $\times $\\
honest-but-curious nodes& \checkmark &\checkmark  & \checkmark &\checkmark &\checkmark \\
eavesdroppers&\checkmark & $\times $ & $\times $&\checkmark &\checkmark \\
complexity&low & low & low & high & low
\end{tabular}
\end{table}

It can be seen in Table \ref{table1} that, besides homomorphic encryption \cite{hadjicostis2020privacy}, our proposed algorithm can be effective in a number of situations. Furthermore, our proposed algorithm is built on the simple multiplications and addition steps (can be computed in $\mathcal{O}$(1) time). By comparison, the implementation of \cite{hadjicostis2020privacy} requires extra cryptosystems to perform complicated modular exponentiation steps (see \cite{goldreich2007foundations} for more details about its computation and time complexity). Hence, in contrast to privacy-preserving approach in \cite{hadjicostis2020privacy}, our proposed algorithm covers a wider range of applications.

%%%%%%%%%%%%%%%%%%%%%%%%%%%%%%%%%%%%%%%%%%%%%%%%%%%%%%%%%%%%%%%%%%%%%%%%%%%%%%%%
\section {CONCLUSION AND FUTURE WORK}

In this paper, we proposed a privacy-preserving push-sum algorithm based on state decomposition for systems interacting on directed graphs. While protecting privacy from honest-but-curious nodes, our approach can guarantee the convergence to exact average. Moreover, in contrast to the offset-adding or random-weight approach, our proposed approach can prevent an external eavesdropper from estimating the initial value of each node. Furthermore, our proposed algorithm has lower computation and communication complexity which can be easily implemented in practice. Future work includes {\color{black}analyzing the convergence rate of the privacy-preserving push-sum algorithm}  and studying other types of consensus problems. 
\\

%\bibliographystyle{ieeetr}
%\bibliography{ref}

\end{document}